\newcommand{\A}{\mathcal{A}} 
\newcommand{\B}{\mathcal{B}}
\newcommand{\C}{\mathcal{C}} 
\newcommand{\D}{\mathcal{D}}
\newcommand{\E}{\mathcal{E}} 
\newcommand{\G}{\mathcal{G}}
\newcommand{\M}{\mathcal{M}} 
\renewcommand{\O}{\mathcal{O}}
\newcommand{\R}{\mathcal{R}}
\renewcommand{\S}{\mathcal{S}} 
\newcommand{\T}{\mathcal{T}}
\newcommand{\ls}{{\sf{lst}}}
\newcommand{\authorComment}[3]{{\color{#1}\textbf{[\!\![\!\![\marginpar{\centering{\color{#1}\textbf{#2}}}~#2: #3 --- #2.~]\!\!]\!\!]}}}
\newcommand{\ly}[1]{\authorComment{purple}{LY}{#1}}
\newcommand{\alphabet}{\Sigma}
\newcommand{\emptyword}{\epsilon}
\newcommand{\finwords}{\alphabet^*}
\newcommand{\poswords}{\alphabet^+}
\newcommand{\infwords}{\alphabet^\omega}
\newcommand{\states}{Q}
\newcommand{\trans}{\delta}
\newcommand{\init}{q_0}
\newcommand{\wordletter}[2]{#1{[#2]}}
\newcommand{\run}{\rho}
\newcommand{\langsymb}[0]{\mathcal{L}}
\newcommand{\finlang}[1]{\langsymb_{*}(#1)}
\newcommand{\inflang}[1]{\langsymb(#1)}
\newcommand{\flang}[1]{[#1]}
\newcommand{\size}[1]{|#1|}
\newcommand{\fpaths}{\rm{FPaths}}
\newcommand{\ipaths}{\rm{IPaths}}
\newcommand{\distr}{\mathsf{Distr}}
\newcommand{\ltlf}{\text{LTLf}\xspace}
\newcommand{\ppltl}{\text{PPLTL}\xspace}
\newcommand{\ltl}{\text{LTL}\xspace}
\newcommand{\ltlfplus}{\text{LTLf+}\xspace}
\newcommand{\ppltlplus}{\text{PPLTL+}\xspace}
\newcommand{\dfaBuild}{\textsf{DFA}}
\newcommand{\ltlfU}{\textsf{U}}
\newcommand{\ltlfX}{\textsf{X}}
\newcommand{\ltlfN}{\textsf{N}}
\newcommand{\ltlfNeg}{\neg}
\newcommand{\ltlfG}{\textsf{G}}
\newcommand{\ltlfF}{\textsf{F}}
\newcommand{\ltlftrue}{\textsf{true}}
\newcommand{\ltlffalse}{\textsf{false}}
\newcommand{\ppltlY}{\textsf{Y}}
\newcommand{\ppltlS}{\textsf{S}}
\newcommand{\ppltlFirst}{\textsf{first}}
\newcommand{\nextop}{\textsf{next}}
\newcommand{\acccond}{\alpha}
\newcommand{\psem}{\textsf{Psem}}
\newcommand{\psyn}{\textsf{Psyn}}
\newcommand{\pp}{\mathbb{P}}
\newtheorem{theorem}{Theorem}
\newtheorem{proposition}{Proposition}
\newtheorem{example}{Example}
\newcommand{\lang}[1]{\mathcal{L}({#1})}
\newcommand{\buchiaut}{B\"uchi\xspace}
\newcommand{\acc}{F}
\newcommand{\initState}{\iota}
\renewcommand{\B}{\mathcal{B}}
\renewcommand{\M}{\mathcal{M}}
\renewcommand{\C}{\mathcal{C}}
\newcommand{\ap}{\textsf{AP}}
\newcommand{\setnocond}[1]{\{#1\}}
\newcommand{\prob}{\mathsf{P}}
\newcommand{\act}{\text{Act}}
\newcommand{\lab}{\textit{L}}
\title{Solving MDPs with LTLf+ and PPLTL+ Temporal Objectives}
\author{
Giuseppe De Giacomo$^1$
\and 
Yong Li$^{2}$ \footnote{Corresponding author}
\and
Sven Schewe$^3$
\and
Christoph Weinhuber$^1$
\and 
Pian Yu$^{4 \; *}$
\affiliations
$^1$ Department of Computer Science, University of Oxford, UK \\
$^2$ Key Laboratory of System Software (Chinese Academy of Sciences) and State Key  Laboratory of Computer Science, Institute of Software Chinese Academy of Sciences, PRC \\
$^3$ Department of Computer Science, University of Liverpool, UK \\
$^4$ Department of Computer Science, University College London, UK \\
\emails
giuseppe.degiacomo@cs.ox.ac.uk, liyong@ios.ac.cn, sven.schewe@liverpool.ac.uk, christoph.weinhuber@cs.ox.ac.uk, pian.yu@ucl.ac.uk
}
	\newcommand{\sven}[1]{\todo[inline,color=teal!10,caption={Sven}]{\textbf{Sven:} #1}}
    \renewcommand{\ly}[1]{\todo[inline,color=orange!10,caption={LY}]{\textbf{LY:} #1}}
 	\newcommand{\giuseppe}[1]{\todo[inline,color=magenta!10,caption={Giuseppe}]{\textbf{Giuseppe:} #1}}
	\newcommand{\sven}[1]{}
	\newcommand{\ly}[1]{}
	\newcommand{\pian}[1]{}
 	\newcommand{\giuseppe}[1]{}
\begin{document}

\maketitle

\setlength{\abovedisplayskip}{1pt}
\setlength{\belowdisplayskip}{1pt}

\begin{abstract}
The temporal logics \ltlfplus and \ppltlplus have recently been proposed to express objectives over infinite traces.
These logics are appealing because they match the expressive power of \ltl on infinite traces while enabling efficient DFA-based techniques, which have been crucial to the scalability of reactive synthesis and adversarial planning in \ltlf and \ppltl over finite traces.
In this paper, we demonstrate that these logics are also highly effective in the context of MDPs.
Introducing a technique tailored for probabilistic systems, we 
leverage the benefits of efficient DFA-based methods and compositionality. This approach is simpler than its non-probabilistic counterparts in reactive synthesis and adversarial planning, as it accommodates a controlled form of nondeterminism (``good for MDPs") in the automata when transitioning from finite to infinite traces. Notably, by exploiting compositionality, our solution is both implementation-friendly and well-suited for straightforward symbolic implementations.

\end{abstract}

\section{Introduction}

Temporal logics are widely used as specification languages in reactive synthesis and adversarial planning~\cite{DBLP:books/daglib/0020348,camacho2019towards}. Among these, linear temporal logic (\ltl)~\cite{pnueli1977temporal} is perhaps the most commonly used. LTL is a formalism used to specify and reason about the temporal behaviour of systems over infinite traces. It has been extensively employed as a specification mechanism for temporally extended goals, as well as for expressing preferences and soft constraints in various fields, including business processes,  robotics, and AI~\cite{bienvenu2011specifying,maggi2011monitoring,fainekos2009temporal}. Linear temporal logic over finite traces (\ltlf) ~\cite{baier2006planning,de2013linear,DBLP:conf/ijcai/GiacomoV15} is a variant of \ltl  with the same syntax but it is interpreted over finite instead of infinite traces.
\ppltl is the pure-past version of \ltlf and scans the trace backwards from the end towards the beginning~\cite{DBLP:conf/ijcai/GiacomoSFR20}. It is well-established that strategy synthesis for \ltlf and \ppltl can be derived from deterministic finite automata (DFA), thereby avoiding the challenges associated with determinising automata for infinite traces, typical of LTL reactive synthesis.

The temporal logics \ltlfplus and \ppltlplus have recently been proposed to express objectives over infinite traces~\cite{DBLP:journals/corr/abs-2411-09366}. These logics are  directly based on Manna and Pnueli's hierarchy of temporal properties \cite{DBLP:conf/podc/MannaP89}. This hierarchy  categorizes temporal properties on infinite traces in \emph{four classes}, which are obtained by requiring that a finite trace property holds for \emph{some prefixes} of infinite traces (``guarantee"), \emph{all prefixes} (``safety"), \emph{infinitely many prefixes} (``recurrence") and for  \emph{all but finitely many prefixes} (``persistence"). Notably every \ltl property can be expressed as a Boolean combination of these four kinds of properties \cite{DBLP:conf/podc/MannaP89}. \ltlfplus and \ppltlplus use respectively \ltlf and \ppltl to express properties over finite traces and obtain the four basic kinds of infinite trace properties of the Manna and Pnueli's hierarchy. This makes them particularly interesting from the computational point of view. While they retain the expressive power of \ltl on infinite traces, they enable the lifting of DFA-based techniques developed for \ltlf and \ppltl to obtain deterministic automata on infinite traces corresponding to formulas, thus avoiding  determinisation of B\"uchi automata, which is known to be a notorious computational bottleneck. As a result, \ltlfplus and \ppltlplus are particularly promising for a number of tasks, such as reactive synthesis \cite{PnueliR89,finkbeiner2016synthesis}, supervisory control for temporal properties \cite{EhlersLTV17}, and planning for 
temporally extended goals in nondeterministic domains \cite{DBLP:journals/amai/BacchusK98,DR-IJCAI18}. 
All these tasks require to obtain from the temporal formula a \emph{deterministic automaton on infinite traces}, to be used as a game arena over which a strategy can be computed to achieve the required property.  
\ltlfplus and \ppltlplus excel at these tasks by enabling simple arena construction through the Cartesian product of DFAs, corresponding to the finite trace (\ltlf/\ppltl) components in the \ltlfplus/\ppltlplus formula.
On the other hand, the game to be solved over this arena is an Emerson-Lei game \cite{EmersonL87}, which requires quite sophisticated techniques \cite{HausmannLP24}.

In this paper, we demonstrate that these logics are even more effective in the context of MDPs.
Traditionally, deterministic Rabin automata have been the standard choice for representing LTL specifications in MDPs~\cite{DBLP:books/daglib/0020348}.
Due to the probabilistic nature of MDPs, the automaton for temporal specifications does not need to be entirely deterministic. State-of-the-art MDP synthesis methods use a restricted form of \buchiaut automata called \emph{Limit-Deterministic Büchi Automata} (LDBAs) for LTL~\cite{DBLP:conf/concur/HahnLST015,DBLP:conf/cav/SickertEJK16,10.24963/ijcai.2023/465}. Recent work has shown that an even more relaxed form of nondeterminism, termed ``\emph{good for MDPs}" (GFM), can be effectively used for MDP synthesis \cite{DBLP:conf/tacas/HahnPSS0W20,DBLP:conf/concur/Schewe0Z23}.

Since this relaxed nondeterminism enables more succinct representations of temporal specifications, we adopt GFM automata in this work. By leveraging GFM's controlled nondeterminism, we present techniques for solving MDPs with \ltlfplus/\ppltlplus objectives that maintain the compositional DFA-based approach of \cite{DBLP:journals/corr/abs-2411-09366}. Instead of using Emerson-Lei automata, our construction obtains simple LDBAs from the DFAs of \ltlf/\ppltl components corresponding to Manna and Pnueli's four classes, then composes them while preserving the ``good for MDPs" property.
The result is a GFM B\"uchi automaton that can be used to solve MDPs in a standard way \cite{DBLP:books/daglib/0020348}. This gives us a simple technique that not only is sound, complete, and computationally optimal, but is both implementation-friendly and well-suited for straightforward symbolic implementation.

\section{Preliminaries}

In the whole paper, we will fix a set of atomic propositions $\ap$.
We denote by $\alphabet = 2^{\ap}$ the set of interpretations over $\ap$; $\alphabet$ is also called the \emph{alphabet} set.
Let $\finwords$ and $\infwords$ denote the set of all finite and infinite sequences, respectively.
The empty sequence is denoted as $\emptyword$ and we index a sequence $u = a_0 a_1 \cdots a_n \cdots$ from $0$.
Moreover, we let $\poswords = \finwords\setminus\setnocond{\emptyword}$.
We denote by $\wordletter{w}{i\cdots j}$ the fragment that starts at position $i$ and ends (inclusively) at position $j$.
Particularly, $\emptyword = \wordletter{w}{i\cdots j}$ for all $w $ if $j < i$.
We denote by $|w|$ the number of letters in $w$ if $w$ is a finite sequence and $\infty$ otherwise.
For a finite or infinite sequence $w$, $\wordletter{w}{0\cdots i}$ is said to be a \emph{prefix} of $w$ if $ 0 \leq i < |w|$.
A \emph{trace} is a \emph{non-empty} finite or infinite sequence of letters in $\alphabet$;
Specially, $\emptyword$ is \emph{not} a trace.

\subsection{\ltlfplus and \ppltlplus over Infinite Traces}
\ltlfplus and \ppltlplus have been derived from \ltlf and \ppltl, respectively~\cite{DBLP:journals/corr/abs-2411-09366}.
The syntax of an \ltlf formula~\cite{baier2006planning,de2013linear} over a finite set of propositions $\ap$ is defined as $\phi ::= a \in \ap \mid \ltlfNeg \phi \mid \phi \land \phi \mid \phi \lor \phi \mid \ltlfX \phi \mid \phi \ltlfU \phi \mid \ltlfF \phi \mid \ltlfG \phi$.
Here $\ltlfX$ (strong Next), $\ltlfU$ (Until), $\ltlfF$ (or $\diamond$) (Finally/Eventually), and $\ltlfG$ (or $\square$) (Globally/Always) are temporal operators. 
The syntax of Pure Past LTL over finite traces (\ppltl) is given as $\phi ::= a \in \ap \mid \neg \phi \mid \phi \land \phi \mid \phi \lor \phi \mid \ppltlY \phi \mid \phi \ppltlS \phi$.
Here $\ppltlY$ (``Yesterday") and $\ppltlS$ (``Since") are the past operators, analogues of ``Next" and ``Until", respectively, but in the past.

Although \ltlf and \ppltl have the same expressive power, translating \ltlf to DFAs requires a doubly exponential blow-up~\cite{de2013linear}, while translating \ppltl to DFAs requires only a singly exponential blow-up~\cite{DBLP:conf/ijcai/GiacomoSFR20}.
We refer interested readers to~\cite{de2013linear} and~\cite{DBLP:conf/ijcai/GiacomoSFR20} for the semantics of \ltlf and \ppltl, respectively.
The language of an \ltlf/\ppltl formula $\phi$, denoted $\flang{\phi}$, is the set of \emph{finite traces} over $2^{\ap}$ that satisfy $\phi$.

The syntax of \ltlfplus (resp. \ppltlplus) is given by the following grammar:
\[ \Psi ::= \forall \phi \mid \exists \phi \mid \forall \exists \phi \mid \exists\forall \phi \mid \Psi \lor \Psi \mid \Psi \land \Psi \mid \neg \Psi\]
where $\phi$ are \emph{finite-trace} formulas in \ltlf/\ppltl over $\ap$.

Let $w \in \infwords$ be an infinite trace and $\phi$ an \ltlf/\ppltl formula.
We use $\models_{+}$ for \emph{non-empty} finite traces and $\models$ for infinite traces.
The semantics of an \ltlfplus/\ppltlplus formula is defined by quantifying over the prefixes of infinite traces:
\begin{itemize}\itemsep=0pt
    \item $w \models \forall \phi$ means that, for all $i \geq 0$, $w[0\cdots i] \models_{+} \phi$.
    \item $w \models \exists \phi$ means that there exists an integer $i \geq 0$ such that $w[0 \cdots i] \models_{+} \phi$.
    \item $w \models \forall \exists \phi$ means that, for every $i \geq 0$, there exists an integer $j \geq i$ such that $w[0\cdots j] \models_{+} \phi$.
    \item $w \models \exists \forall \phi$ means that there exists an integer $i \geq 0$ such that, for all integer $j \geq i$, $w[0\cdots j] \models_{+} \phi$.
\end{itemize}
Similarly, we denote by $\flang{\Psi}$ the set of \emph{infinite traces} satisfying the \ltlfplus/\ppltlplus formula $\Psi$.
It has been shown in~\cite{DBLP:journals/corr/abs-2411-09366} that \ltlfplus, \ppltlplus, and \ltl have the same expressive power.

\subsection{Markov Decision Processes}

Following \cite{DBLP:books/daglib/0020348},
a Markov decision process (MDP) $\M$ is a tuple $(S, \act, \prob, s_0, \lab)$ with a finite set of states $S$, a set of actions $\act$, a transition
probability function $\prob : S \times \act \times S \rightarrow [0, 1]$, an initial state $s_0 \in S$ and a labelling function $\lab : S \rightarrow 2^{\ap}$ that labels a state with a set of propositions that hold in that state.
A path $\xi$ of $\M$ is a finite or infinite sequence of alternating states and actions $\xi=s_0 a_0 s_1 a_1\cdots$, ending with a state if finite, such that for all $i 
\geq 0$, $a_i \in \act(s_i)$ and $\prob(s_i, a_i, s_{i+1}) > 0 $.
The sequence $\lab(\xi)=\lab(s_0)\lab(s_1), \cdots$ over $\ap$ is called the \emph{trace} induced by the path $\xi$ over $\M$.

Denote by $\fpaths$ and $\ipaths$ the set of all finite and infinite paths of $\M$, respectively. A strategy $\sigma$ of $\M$ is a function $\sigma: \fpaths \to \distr(\act)$ such that, for each $\xi\in \fpaths$, $\sigma(\xi)\in \distr(\act({\ls}(\xi)))$, where ${\ls}(\xi)$ is the last state of the finite path $\xi$ and $\distr(\act)$ denotes the set of all possible distributions over $\act$.
Let $\Omega_\sigma^{\M}(s)$ denote the subset of (in)finite paths of $\M$ that correspond to strategy $\sigma$ and initial state $s_0$.

A strategy $\sigma$ of $\M$ is able to resolve the nondeterminism of an MDP and induces a Markov chain (MC) $\M^{\sigma} = (S^+, \prob_{\sigma}, \ap, \lab')$ where for $u = s_0\cdots s_n \in S^+$, $\prob_{\sigma}(u, u\cdot s_{n+1}) = \prob(s_n, \sigma(u), s_{n+1})$ and $\lab'(u) = \lab(s_n)$.

A \emph{sub-MDP} of $\M$ is an MDP $\M' = (S', \act', \prob', \lab)$ where $S' \subseteq S, \act' \subseteq \act$ is such that for every $s \in S'$, $\act'(s)\subseteq \act(s)$, and $\prob'$ and $\lab'$ are obtained from $\prob$ and $\lab$, respectively, when restricted to $S'$ and $\act'$.
In particular, $\M'$ is closed under probabilistic transitions, i.e., for all $s \in S'$ and $a \in\act' $ we have that $\prob'(s, a, s') > 0$ implies that $s' \in S'$.
An \emph{end-component} (EC) of an MDP $\M$ is a sub-MDP $\M'$ of $\M$ such that its underlying graph is strongly connected and it has no outgoing transitions.
A maximal end-component (MEC) is an EC $\E = (E, \act', \prob', \lab)$ such that there is no other EC $\E = (E', \act'', \prob'', \lab)$ such that $E \subset E'$.
An MEC $E$ that cannot reach states outside $E$ is called a \emph{leaf} component.

\begin{theorem}[\cite{DBLP:phd/us/Alfaro97,DBLP:books/daglib/0020348}]\label{thm:end-component}
    Once an end-component $E$ of an MDP is entered, there is a strategy that visits every state-action combination in $E$ with probability $1$ and stays in $E$ forever. Moreover, for every strategy the union of the end-components is visited with probability $1$.
    An infinite path of an MC $\M$ almost surely (with probability $1$) will enter a leaf component.
\end{theorem}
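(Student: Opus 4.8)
The plan is to prove the three assertions separately, exploiting throughout that both $S$ and $\act$ are finite. For the first claim I would exhibit an explicit witness: the memoryless randomised strategy $\sigma_E$ that, in every state $s$ of the end-component $E$, picks each action of $\act'(s)$ uniformly at random. Since $E$ has no outgoing transitions, every $\sigma_E$-path that starts in $E$ stays in $E$ forever, so $\sigma_E$ induces a finite Markov chain on state space $E$; strong connectedness of the underlying graph of $E$, together with the fact that every enabled action receives positive probability, makes this chain irreducible. A finite irreducible Markov chain visits every state infinitely often almost surely, and conditioning on the infinitely many visits to a fixed $s$ --- where each $a \in \act'(s)$ is then chosen with the fixed probability $1/|\act'(s)| > 0$ --- a Borel--Cantelli argument for these uniformly lower-bounded events gives that each state-action pair, hence all of them, is taken infinitely often with probability $1$.

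For the second claim I would run the standard ``the limit of a run is an end-component'' argument. Fix an arbitrary strategy $\sigma$ and a $\sigma$-path $\xi$ from $s_0$; let $\mathrm{Inf}(\xi) \subseteq S$ be the set of states occurring infinitely often along $\xi$ and, for each such $s$, let $A_s(\xi)$ be the set of actions played from $s$ infinitely often, which is nonempty because $\act$ is finite. The goal is to show that, almost surely, $\mathrm{Inf}(\xi)$ equipped with the actions $A_\cdot(\xi)$ and the inherited transition probabilities is a sub-MDP that is closed under probabilistic transitions and strongly connected, i.e.\ an end-component --- which immediately yields that the union of the end-components is entered with probability $1$. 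Closure reduces to the key lemma: for every $s$ and every $a \in \act(s)$, on the event ``$s$ is visited infinitely often and $a$ is played at $s$ infinitely often'', almost surely every $s'$ with $\prob(s,a,s') > 0$ is visited infinitely often as well. Strong connectedness of $\mathrm{Inf}(\xi)$ under the actions in $A_\cdot(\xi)$ then follows because, along the tail of $\xi$, any two infinitely-often-visited states are reached from each other using precisely the state-action pairs that recur infinitely often.

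The third claim I would obtain as a special case of the second: an MC is an MDP with exactly one action enabled per state, so its end-components are the closed strongly connected vertex sets and the leaf (maximal, non-escaping) ones are exactly the bottom strongly connected components; by the second claim almost every infinite path has $\mathrm{Inf}(\xi)$ equal to such a component, hence enters a leaf component with probability $1$. For a finite MC one could also argue directly, showing that from any state some leaf component is reached within $|S|$ steps with probability at least a constant $\delta > 0$ and iterating. The one genuinely delicate point --- and the step I would spend the most care on --- is the key lemma of the second paragraph: the events ``the $n$-th play of $a$ at $s$ moves to $s'$'' need not be independent, since $\sigma$ may depend on the whole history, so I would prove it via L\'{e}vy's conditional Borel--Cantelli lemma (the $0$--$1$ law), using that each such conditional probability is at least $\prob(s,a,s') > 0$; everything else in the argument is finite-Markov-chain bookkeeping.
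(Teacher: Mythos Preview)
The paper does not prove this theorem at all: it is stated with a citation to de~Alfaro's thesis and to Baier--Katoen and is used as a black box throughout. So there is no ``paper's own proof'' to compare against. Your proposal is the standard textbook argument from those very references (uniform memoryless strategy inside an end-component; the $\mathrm{Inf}(\xi)$-is-an-EC lemma via conditional Borel--Cantelli; specialisation to Markov chains), and it is correct.

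One small wording issue worth tightening: in the first paragraph you write ``Since $E$ has no outgoing transitions'', but in general an end-component may well have states whose \emph{full} action set $\act(s)$ contains actions leaving $E$; it is only the restricted set $\act'(s)$ that keeps you inside. Your strategy already plays only actions from $\act'(s)$, so the argument is fine, but the phrasing should say that closure of the sub-MDP under its restricted actions is what keeps $\sigma_E$-paths in $E$, not that $E$ literally has no outgoing transitions. Everything else --- including your identification of the conditional Borel--Cantelli step as the one genuinely non-trivial point --- is exactly right.
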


\subsection{Automata}

A (nondeterministic) transition system (TS) is a tuple $\T = (\states, \init, \trans)$, where $\states$ is a finite set of states, $\init \in \states$ is the initial state, and $\trans: \states \times \alphabet \rightarrow 2^{\states}$ is a transition function. 
We also lift $\trans$ to sets as 
$\trans (S, a) := \bigcup_{q\in S}\trans(q, a)$.
A \emph{deterministic} TS is such that if, for each $q \in \states$ and $a \in \alphabet$, $\size{\trans(q, a)} \leq 1$.

An automaton $\A$ is defined as a tuple $(\T, \acccond)$, where $\T$ is a TS and $\acccond$ is an acceptance condition.
A \emph{finite run} of $\A$ on a finite word $u$ of length $n \geq 0$ is a sequence of states $\run = q_{0} q_{1} \cdots q_{n} \in \states^{+}$ such that, for every $0 \leq i < n$, $q_{i+1} \in \trans(q_{i}, \wordletter{u}{i})$, where $\wordletter{u}{i}$ indicates the letter of $u$ in position $i$.

For finite words, we consider finite automata with deterministic TS, known as deterministic finite automata (DFA), where $\acccond = \acc \subseteq \states$ is a set of \emph{final} states.
A finite word $u$ is accepted by the DFA $\A$ if its run $q_0 \cdots q_n$ ends in a final state $q_n \in \acc$.
For an infinite word $w$, a \emph{run} of $\A$ on $w$ is an infinite sequence of states $\run = q_{0}q_1q_2\cdots$ such that, for every $i \geq 0$, $q_{i+1} \in \trans(q_{i}, w[i])$.
Let $\inf(\run)$ be the set of states that occur infinitely often in the run $\run$.
We consider the following acceptance conditions for automata on infinite words:
\begin{itemize}
    \item[\buchiaut/co-\buchiaut.] $\acccond = \acc\subseteq \states$ is a set of \emph{accepting} (\emph{rejecting}, resp.) states for \buchiaut (co-\buchiaut, resp.). A run $\run$ satisfies the \buchiaut (co-\buchiaut, resp.) acceptance condition $\acccond$ if $\inf(\run) \cap \acc \neq \emptyset$ ($\inf(\run) \cap F = \emptyset$, resp.).

    \item[Rabin.] $\acccond = \bigcup^k_{i = 1} \setnocond{(B_i, G_i)}$ is such that $B_i \subseteq \states$ and $G_i\subseteq\states$ for all $1\leq i\leq k$. A run $\rho$ satisfies $\acccond$ if there is some $j \in [1,k]$ such that $\inf(\run) \cap G_j \neq \emptyset$ and $\inf(\run)\cap B_i = \emptyset$.
    
\end{itemize}

A run is \emph{accepting} if it satisfies the condition $\acccond$;
A word $w \in \infwords$ is \emph{accepted} by $\A$ if there is an accepting run $\run$ of $\A$ over $w$.
We use three letter acronyms in $\{D, N\} \times \setnocond{F, B, C, R} \times \setnocond{A}$ to denote automata types where the first letter stands for the TS mode, the second for the acceptance type and the third for automaton.
For instance, DBA stands for deterministic \buchiaut automaton.
An NBA is called a \emph{limit deterministic} \buchiaut automaton (LDBA) if its TS becomes deterministic after seeing accepting states in a run.
We assume that all automata are \emph{complete}, i.e., for each state $s\in\states$ and letter $a\in\alphabet$, $|\trans(s,a)|\geq 1$. 

We denote by $\finlang{\A}$ the set of finite words accepted by a DFA $\A$ or the language of $\A$.
Similarly, we denote by $\inflang{\A}$ the \emph{$\omega$-language} recognized by an $\omega$-automaton $\A$, i.e., the set of $\omega$-words accepted by $\A$.

\section{Classic MDP Synthesis Approach}

In non-probabilistic scenarios, such as reactive synthesis and planning, deterministic $\omega$-automata have to be constructed, e.g.~\cite{DBLP:journals/corr/abs-2411-09366}.
In contrast, in the probabilistic setting, there can be a controlled form of nondeterminism called \emph{good for MDPs (GFM)} due to the effect of probabilities~\cite{DBLP:conf/tacas/HahnPSS0W20,DBLP:conf/concur/Schewe0Z23}.  
More precisely, to synthesise a strategy $\sigma$ for an MDP $\M$ that maximises the satisfaction probability of a given temporal objective $\Psi$, we do the following steps:
first, we construct a GFM automaton $\A$ that recognises $[\Psi]$, then build the product MDP $\M \times\A$, and finally synthesise a strategy $\sigma$ on $\M\times \A$ that maximises the probability of reaching accepting MECs.
Our MDP synthesis algorithm with \ltlfplus and \ppltlplus objectives follows the same methodology; our main contribution is a construction of \ltlfplus/\ppltlplus to GFM automata.

To make our presentation more general, we will assume that we are given a (possibly nondeterministic) $\omega$-automaton $\A = (Q, \delta, q_0, \alpha)$ as specification and an MDP $\M = (S, \act, \prob, s_0, \lab)$.
To find an optimal strategy $\sigma$, we define the semantic satisfaction probability of the induced MC $\M^{\sigma}$ for $\lang{\A}$ as 
$\pp_{\M^{\sigma}}(\lang{\A}) = \pp\setnocond{ \xi \in \Omega_\sigma^{\M}(s_0): \lab(\xi) \in \lang{\A}}$.

For an MDP $\M$, we define the maximal semantic satisfaction probability as
$\psem(\M, \A) = \sup_{\sigma} \pp_{\M^{\sigma}}(\lang{\A})$.
Clearly, for two language-equivalent automata $\A$ and $\B$, it holds that $\psem(\M, \A)= \psem(\M,\B)$.

\paragraph*{Product MDP.}
As aforementioned, we find the strategy that obtains $\psem(\M, \A)$ by building $\M  \times \A$, formally defined as $\M \times \A = (S \times \states, \act \times \states, \prob^{\times}, \langle s_0, q_0\rangle, \lab^{\times}, \acccond^{\times})$ augmented with the acceptance condition $\acccond^{\times}$ where
\begin{itemize}\itemsep=0pt
    \item $\prob^{\times}: (S \times \states) \times (\act\times\states) \times (S \times \states) \rightarrow [0,1]$ such that $\prob^{\times}(\langle s, q\rangle, \langle a, q'\rangle, \langle s', q'\rangle) = \prob(s, a, s')$ if $\prob(s, a, s') > 0$ and $q' \in \trans(q, \lab(s))$,
    \item $\lab^{\times}(\langle s, q\rangle) = \lab(s)$ for a state $\langle s, q\rangle \in S \times \states$, and
    \item For \buchiaut/co-\buchiaut, $\acccond^{\times} = \acc^{\times} = \setnocond{\langle s, q\rangle \in S \times \states: q \in \acc}$, while for Rabin, $B^{\times}_i = \setnocond{\langle s,q\rangle: q \in B_i}$ and $G^{\times}_i = \setnocond{\langle s,q\rangle: q \in G_i}$ for $i \in [1,k]$.
\end{itemize}

Intuitively, $\M\times \A$ resolves the nondeterminism in $\A$ by making each successor an explicit action. Then, we can generate traces in $\lang{\A}$ by enforcing the acceptance condition $\acccond$. 
Now we define the maximal \emph{syntactic} satisfaction probability:
\[\psyn(\M, \A) = \sup_{\sigma} \pp \setnocond{\xi \in \Omega^{\M\times\A}_{\sigma}(\langle s_0,q_0\rangle) : \xi \text{ is accepting}} .\]
Clearly, $\psyn(\M, \A) \leq \psem(\M,\A)$ because accepting runs $\xi$ only occur on accepting words.
Moreover, $\psyn(\M, \A) = \psem(\M,\A)$ if $\A$ is deterministic. 

An EC containing an infinite run that visits all states and transitions yet satisfies $\acccond^{\times}$ is said to be accepting.
According to Theorem~\ref{thm:end-component}, all state-action pairs in an EC can be visited with probability $1$.
Since an accepting run of $\M\times\A$ must eventually enter an accepting MEC \cite{DBLP:books/daglib/0020348}, the syntactic satisfaction probability can be formalised as:
\[\psyn(\M, \A) = \sup_{\sigma}\pp_{(\M\times\A)^{\sigma}}(\diamond X)\] 
where $X$ is the set of states of the accepting MECs (AMECs) in $\M\times\A$.
GFM automata are the $\omega$-automata, whose nondeterminism can be correctly resolved by the strategy. Formally, 
An \emph{$\omega$-automaton $\A$ is  GFM} if, for all finite MDPs $\M$, $\psem(\M, \A) = \psyn(\M,\A)$ holds \cite{DBLP:conf/tacas/HahnPSS0W20}.

Typical GFM automata include: i) deterministic automata, ii) good-for-games automata~\cite{DBLP:conf/csl/HenzingerP06} that have a strategy to produce an accepting run for every accepting word, and iii) LDBAs that satisfy certain conditions~\cite{DBLP:conf/concur/HahnLST015,DBLP:conf/cav/SickertEJK16}.

To use GFM automata in synthesis, we describe a game-theoretic approach to decide what automata are GFM.

\paragraph*{AEC-simulation game.} While determining the GFMness of an NBA is PSPACE-hard~\cite{DBLP:conf/concur/Schewe0Z23}, we can use the two-player \emph{accepting end-component simulation} (AEC simulation) game~\cite{DBLP:conf/tacas/HahnPSS0W20} between Spoiler and Duplicator to prove that our constructed automata are GFM.
Specially, given a GFM automaton $\A$ and an automaton $\B$ with $\lang{\B} = \lang{\A}$, if Duplicator wins the AEC-simulation game, $\B$ is also GFM.

In the game, Spoiler places a pebble on the initial state of $\A$, and Duplicator responds by placing a pebble on the initial state of $\B$. The players moves alternately: Spoiler chooses a letter and transition in $\A$, and Duplicator chooses the corresponding transition over the same letter in $\B$. Unlike classic simulation games, Spoiler can, once during the game, make an AEC claim, that she has reached an AEC and provide all transition sequences that will henceforth occur infinitely often in $\A$. Those transitions cannot be updated afterwards.
The game continues with both players producing infinite runs in their respective automata. Duplicator wins if: (1) Spoiler never makes an AEC claim, (2) the run constructed in $\B$ is accepting, (3) the run constructed in $\A$ does not comply with the AEC claim, or (4) the run constructed in $\A$ is not accepting. 
We say that $\B$ AEC-simulates $\A$, if Duplicator wins.

\begin{theorem}[\cite{DBLP:conf/tacas/HahnPSS0W20}]\label{lem:aec-sim}
    If $\A$ is GFM, $\B$ AEC-simulates $\A$ and $\lang{\B} = \lang{\A}$, then $\B$ is also GFM.
\end{theorem}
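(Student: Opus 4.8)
The plan is to reduce the claim to a single inequality between syntactic satisfaction probabilities and then discharge that inequality by ``executing'' Duplicator's winning strategy inside a product MDP. Fix an arbitrary finite MDP $\M$. Since $\lang{\B} = \lang{\A}$ we have $\psem(\M,\B) = \psem(\M,\A)$, and since $\A$ is GFM, $\psem(\M,\A) = \psyn(\M,\A)$. Together with the always-valid inequality $\psyn(\M,\B) \le \psem(\M,\B)$, it therefore suffices to prove $\psyn(\M,\B) \ge \psyn(\M,\A)$: this forces all four quantities to coincide, in particular $\psem(\M,\B)=\psyn(\M,\B)$, and since $\M$ was arbitrary that is exactly GFMness of $\B$.

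For the inequality I would start from an optimal strategy $\sigma$ on $\M\times\A$ attaining $\psyn(\M,\A) = \sup_{\sigma} \pp_{(\M\times\A)^{\sigma}}(\diamond X)$, where $X$ is the union of the accepting MECs. Optimal reachability strategies in finite MDPs can be taken deterministic and finite-memory, and by Theorem~\ref{thm:end-component} I may additionally assume that, once a run of $(\M\times\A)^{\sigma}$ enters an accepting MEC $C$, $\sigma$ switches to a strategy that stays in $C$ forever and visits every state-action pair of $C$ with probability $1$. The purpose of this normal form is that the set of $\A$-transitions taken infinitely often along such a run is, almost surely, exactly the transition set of $C$ --- hence it is determined already at the (finite) time the run commits to $C$, which is precisely what permits Spoiler to fire a legitimate AEC claim.

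Next I would build a strategy $\tau$ on $\M\times\B$ that simulates $\sigma$ while using Duplicator's winning strategy in the AEC-simulation game as a translator. Concretely, $\tau$ carries as finite memory the $\A$-component that $\sigma$ would currently be in together with the state of the simulation game: a finite history in $\M\times\B$ plus this memory determines the letter sequence $\lab(s_0)\lab(s_1)\cdots$ and the $\A$-transitions that $\sigma$ would play, i.e.\ a partial Spoiler play; $\tau$ then plays the same $\M$-action as $\sigma$ and moves the $\B$-component according to Duplicator's response to that Spoiler play. When the simulated $\A$-run commits to an accepting MEC $C$, $\tau$ makes Spoiler issue the AEC claim consisting of all transitions of $C$. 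Because the $\B$-moves are a deterministic function of the history once Duplicator's strategy is fixed, the induced distribution on $\M$-traces is identical under $(\M\times\A)^{\sigma}$ and $(\M\times\B)^{\tau}$; coupling the two runs on one probability space, on the probability-$\psyn(\M,\A)$ event that the $\A$-run is accepting, Spoiler's play is accepting and (almost surely) complies with the issued AEC claim, so Duplicator cannot have won via conditions (1), (3) or (4) of the game and therefore wins via condition (2) --- the $\B$-run is accepting. Hence the $\M\times\B$-run under $\tau$ is accepting on that event, which yields $\psyn(\M,\B)\ge \psyn(\M,\A)$ and closes the argument.

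The step I expect to be the main obstacle is exactly the one handled by the normal form above: Spoiler must commit at a finite moment to the complete set of infinitely-recurring transitions, whereas ``recurring infinitely often'' is a tail property of the run. Choosing $\sigma$ so that, after reaching an accepting MEC, it deterministically cycles through all of that MEC's state-action pairs makes the recurrence set predictable and almost surely correct, which is what makes the AEC claim admissible. The remaining work --- verifying measurability of $\tau$, and that the coupling preserves probabilities between $(\M\times\A)^{\sigma}$ and $(\M\times\B)^{\tau}$ --- is routine bookkeeping.
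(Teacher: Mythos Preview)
Your proposal is correct and follows essentially the same approach as the paper's proof sketch: both reduce to the inequality $\psyn(\M,\B)\ge\psyn(\M,\A)$ via the chain $\psyn(\M,\B)\ge\psyn(\M,\A)=\psem(\M,\A)=\psem(\M,\B)\ge\psyn(\M,\B)$, and both discharge that inequality by having Spoiler fire the AEC claim at the moment an accepting MEC of $\M\times\A$ is entered (where Theorem~\ref{thm:end-component} pins down the recurrence set) and then letting Duplicator's winning strategy drive the $\B$-component. Your write-up is considerably more explicit about the strategy normal form, the coupling, and why the AEC claim is almost surely complied with, but there is no substantive difference in the argument.
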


The intuition behind Theorem~\ref{lem:aec-sim} is that, for any MDP $\M$, by Theorem~\ref{thm:end-component}, an accepting run of $\M \times \A$ eventually enters an AMEC with probability 1, and the transitions infinitely visited in that AMEC are fixed. When Spoiler makes the AEC claim as the run enters the AMEC, Duplicator can select an accepting run in $\M \times \B$ based on the fixed list of finite traces. Hence, $\psyn(\M, \B) \geq \psyn(\M, \A)$. Since $\lang{\A} = \lang{\B}$ and $\A$ is GFM, we have $\psyn(\M, \B) \geq \psyn(\M, \A) = \psem(\M, \A) \geq \psem(\M, \B)$. The key idea is that, once an AMEC is entered, the list of infinitely visited transitions is fixed due to probability, unlike in the usual simulation game.

GFM automata are more succinct than deterministic automata~\cite{DBLP:conf/cav/SickertEJK16,DBLP:journals/corr/abs-2307-11483}.
This then means that we can obtain smaller product MDP $\M\times\A$ with GFM automata and thus a smaller strategy $\sigma$ since $\sigma$ uses the states in $\M\times\A$ as memory.

We now give a useful observation to prove that our constructed automata from \ltlfplus/\ppltlplus are GFM.
It basically says that language equivalent GFM automata can AEC-simulate each other when an MC is given.
Our proof idea is simple: we just use the optimal strategy $\sigma$ of $\M \times \B$ that obtains $\psem(\M, \B)$ for Duplicator to play against Spoiler.
In this way, Duplicator can always win the AEC-simulation game because $\psem(\M, \A) = \psem(\M, \B)$.

\begin{theorem}\label{lem:bisim}
Let $\A$ and $\B$ be two GFM automata such that $\lang{\A} = \lang{\B}$.
For any MC $\M$, there is a strategy $\sigma$ for $\M \times \B$ to AEC-simulate $\M \times \A$.
\end{theorem}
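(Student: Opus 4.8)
The plan is to hand Duplicator a single strategy, fixed once and for all, and to argue that it wins the AEC-simulation game with probability $1$ no matter what Spoiler does. Since $\M$ is a Markov chain, in the game between $\M\times\A$ (Spoiler) and $\M\times\B$ (Duplicator) there is one underlying random walk $\xi$ of $\M$ that drives the $\M$-component of both pebbles; Spoiler only resolves the nondeterminism of $\A$ and Duplicator only that of $\B$, and ``Duplicator wins'' is read as ``against every Spoiler strategy, the winning conditions (1)--(4) hold almost surely''. I would let Duplicator play a strategy $\sigma$ for the MDP $\M\times\B$ (whose only nondeterminism is that of $\B$) attaining the supremum in the definition of $\psyn(\M,\B)$; such an optimal strategy exists because this is a B\"uchi objective on a finite MDP. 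By definition $\pp_{(\M\times\B)^\sigma}(\rho_\B\ \text{accepting}) = \psyn(\M,\B)$, and since $\B$ is GFM, $\psyn(\M,\B) = \psem(\M,\B) = \pp_{\M}(\lab(\xi)\in\lang{\B})$. Note $\sigma$ is well defined inside the game because the $\M$-part of Duplicator's history is exactly $\xi$.

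The crux is a short probabilistic identity. Running $\M$ on its own and applying $\sigma$, let $\rho_\B$ be the induced run of $\M\times\B$ over $\xi$. Projecting an accepting run of $\M\times\B$ to its $\B$-component gives an accepting run of $\B$ on $\lab(\xi)$, so $\{\rho_\B\ \text{accepting}\}\subseteq\{\lab(\xi)\in\lang{\B}\}$; but these two events have the same probability, namely $\psem(\M,\B)$, so they coincide up to a null set, and hence almost surely $\lab(\xi)\in\lang{\B}$ already forces $\rho_\B$ to be accepting. Since Spoiler's choices affect only the run $\rho_\A$ in $\M\times\A$ and not the joint law of $(\xi,\rho_\B)$, this identity persists during the game. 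I would then finish by a case split on an arbitrary play: if $\rho_\A$ is not accepting, Duplicator wins by condition~(4); if $\rho_\A$ is accepting, its $\A$-projection witnesses $\lab(\xi)\in\lang{\A}=\lang{\B}$, so almost surely $\rho_\B$ is accepting and Duplicator wins by condition~(2). Either way Duplicator wins with probability $1$, which is the claim.

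The step I expect to be the main obstacle is not the calculation but making the game model precise: one must fix the exact sense in which the AEC-simulation game runs ``on top of'' the Markov chain $\M$ (both arenas sharing a single random walk, ``winning'' interpreted almost surely), and verify that Spoiler's strategy uses a source of randomness independent of the one generating $(\xi,\rho_\B)$, so that the identity of the previous paragraph transfers into the game unchanged. Once that is in place the argument is as above; in particular it uses neither the AEC-claim mechanism nor the assumption that $\A$ is GFM (only $\lang{\A}=\lang{\B}$ is needed), these being retained for symmetry with the companion direction invoked later.
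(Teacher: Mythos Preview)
Your approach coincides with the paper's at the core: hand Duplicator the optimal strategy $\sigma$ for the MDP $\M\times\B$ that attains $\psyn(\M,\B)=\psem(\M,\B)$, and argue this suffices against any Spoiler behaviour. The paper packages the remaining step as a contrapositive localised at the AEC claim: once Spoiler commits to an accepting end-component of $\M\times\A$ at some state $(m,q)$, every trace from $m$ lies in $\lang{\A^q}$ with probability~$1$; if from the corresponding state $(m,q')$ in $\M\times\B$ the strategy $\sigma$ produced an accepting continuation with probability strictly below~$1$, one would get $\psyn(\M,\B)<\psem(\M,\B)$, contradicting GFM-ness of $\B$. Your global identity $\{\rho_\B\text{ accepting}\}=\{\lab(\xi)\in\lang{\B}\}$ almost surely is a cleaner extraction of the same content, and you are right that it uses neither the AEC-claim mechanism nor the hypothesis that $\A$ is GFM; the paper's argument genuinely invokes the former (to pin Spoiler to a fixed end-component before reasoning about probabilities) but, as you note, also never uses the latter.

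Your caveat about the game model is on point and matches the paper. The paper's own proof is probabilistic throughout---it speaks of ``probability~$1$ to generate an accepting run'' and derives a contradiction by comparing $\psyn$ with $\psem$---so the intended reading is exactly the one you describe: $\M$ supplies a single shared random walk, Spoiler resolves only $\A$'s nondeterminism (plus the one-shot AEC claim), and the winning condition is almost sure. This is also how Theorem~\ref{lem:bisim} is consumed in the proofs of Theorems~\ref{lem:union-gfm} and~\ref{lem:intersection-gfm}, where the conclusions drawn are inequalities between syntactic probabilities, not per-path guarantees. So nothing beyond stating this interpretation is missing from your argument.
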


With these preparations, we present our synthesis approach: Section~\ref{sec:formula-buchi} covers \ltlfplus/\ppltlplus to GFM automata, and Section~\ref{sec:synthesis} outlines our synthesis method.

\section{\ltlfplus/\ppltlplus to GFM B\"uchi Automata}
\label{sec:formula-buchi}

For a given \ltlfplus/\ppltlplus formula $\Psi$, we first construct DFAs for its \ltlf/\ppltl subformulas, followed by automata operations to derive the final GFM automaton.
Existing constructions of GFM LDBAs~\cite{DBLP:conf/concur/HahnLST015,DBLP:conf/cav/SickertEJK16} and GFM NBAs~\cite{DBLP:conf/tacas/HahnPSS0W20} from LTL rely on formula unfolding and construct an explicit state automaton without minimisation.\footnote{For a discussion on minimisation on automata on infinite words, see \cite{DBLP:conf/fsttcs/Schewe10}.} 
In contrast, our approach has two key advantages: 
First, it leverages efficient DFA-based techniques including minimisation at every intermediate step; 
Second, it employs a compositional methodology where the automata for subformulas are constructed and optimised independently before being combined, rather than working with the formula as a whole. Moreover our construction can exploit  symbolic techniques. These aspects allow us to better handle complex specifications by controlling state space growth throughout the process.
This approach applies to \ppltl~\cite{DBLP:conf/ijcai/GiacomoSFR20}.

We assume that we have a construction at hand to efficiently build the DFA for a given \ltlf/\ppltl formula $\phi$.
More precisely, we denote by $\dfaBuild(\phi)$ the DFA constructed for $\phi$.
Note that, $\dfaBuild(\phi)$ does not accept the empty sequence $\emptyword$ by definition\footnote{In some literature, $\emptyword \in [\phi]$ is allowed. In this situation, we must make sure that $\finlang{\dfaBuild(\phi)} = \poswords \cap [\phi]$ since the evaluation of the satisfaction for the \ltlfplus and \ppltlplus formulas quantifies over nonempty finite traces of the infinite trace. }. Figure~\ref{fig:dfa-example} shows a DFA for the \ltlf formula $\phi:=\ltlfF (\textsf{last} \land \text{good})$, alongside a DFA for the language $\flang{\neg\phi}$.

\begin{figure}
    \centering
    \includegraphics[width=0.8\linewidth]{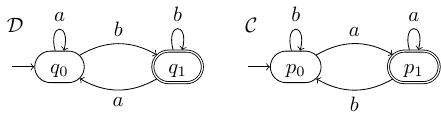}
    \caption{The example DFA $\D$ for \ltlf formula $\phi:=\ltlfF (\textsf{last} \land \text{good})$ that accepts a finite trace in which the proposition $\text{good}$ holds at the last position, and the DFA $\C$ for the language $\flang{\neg\phi}$.
    Here $\textsf{last}$ indicates the last position of a finite trace, i.e.,  $\textsf{last}:= \neg (\ltlfX \ltlftrue)$ and $\alphabet:= 2^{\ap} = \setnocond{a:= \neg \text{good}, b:= \text{good}}$. 
    }
    \label{fig:dfa-example}
\end{figure}

\subsection{Construction for $\exists \phi, \forall \phi, \forall\exists \phi$, and $\exists\forall \phi$}
\label{ssec:constructions}

Next we describe the GFM automata constructions of the formulas $\exists \phi, \forall \phi, \forall\exists \phi$ and $\exists\forall \phi$, which we call \emph{leaf} formulas.

\paragraph{$\exists \phi$.} We first present the construction for $\exists \phi$ below.
\begin{enumerate}\itemsep=0pt
    \item First, let $\D = (\alphabet, Q, \iota, \trans, \acc)$ be $\dfaBuild(\phi)$ such that $\finlang{\D} = [\phi]$.
    \item Second, make all final states in $\acc$ sink final states, i.e., $\trans(s, a) = s$ for each $s \in \acc$ and $a \in \alphabet$, obtaining the automaton $\C = (Q, \iota', \trans', \acc')$.
     $\C$ remains deterministic.
    \item Finally, read it as DBA $\B = ( Q, \iota', \trans', \acc')$.
\end{enumerate}

\begin{theorem}\label{lem:exists-lang}
    $\lang{\B} = \flang{\exists\phi}$.
\end{theorem}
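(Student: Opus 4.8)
The plan is to show the two inclusions $\inflang{\B} \subseteq \flang{\exists\phi}$ and $\flang{\exists\phi} \subseteq \inflang{\B}$ by relating runs of $\B$ on an infinite word $w$ to runs of the original DFA $\D$ on the prefixes of $w$. The key structural observation is that, since $\D$ is deterministic and $\C$ (hence $\B$) is obtained from $\D$ only by redirecting every transition out of a final state into a self-loop, the unique run of $\B$ on $w$ agrees step-for-step with the unique run of $\D$ on each prefix of $w$ \emph{up until the first moment (if any) a final state of $\D$ is reached}; from that moment on the run of $\B$ stays forever in that final state (which is a sink), whereas the run of $\D$ on longer prefixes may leave it.

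For the first inclusion, suppose $w \in \inflang{\B}$. Since $\B$ is a DBA with the same (deterministic, complete) transition structure modulo the sink-final modification, $w$ has a unique run $\run = q_0 q_1 q_2 \cdots$, and acceptance means $\inf(\run) \cap \acc' \neq \emptyset$. Because every state of $\acc'$ is a sink, a final state occurs infinitely often iff it occurs at all, i.e. there is a least index $i$ with $q_i \in \acc'$. By the structural observation, $q_0 \cdots q_i$ is exactly the run of $\D$ on the prefix $w[0\cdots i-1]$ — wait, more carefully: the run of $\D$ on $w[0\cdots i]$ (a word of length $i+1$) ends in state $q_i \in \acc = \acc'$ (the final-state sets coincide), so $w[0\cdots i] \in \finlang{\D} = [\phi]$, and hence $w[0\cdots i] \models_+ \phi$. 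Therefore $w \models \exists\phi$, i.e. $w \in \flang{\exists\phi}$. One should double-check the index bookkeeping between "length of prefix" and "position reached in the run" and the convention that $\dfaBuild(\phi)$ never accepts $\emptyword$, so the relevant prefix is nonempty.

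For the converse, suppose $w \in \flang{\exists\phi}$, so there is some $i \geq 0$ with $w[0\cdots i] \models_+ \phi$, i.e. $w[0\cdots i] \in [\phi] = \finlang{\D}$. Take the \emph{least} such $i$. Then the run of $\D$ on $w[0\cdots i]$ first enters $\acc$ exactly at its last step, so by the structural observation the unique run $\run = q_0 q_1 \cdots$ of $\B$ on $w$ satisfies $q_i \in \acc'$, and since $q_i$ is a sink we get $q_j = q_i \in \acc'$ for all $j \geq i$; hence $\inf(\run) \cap \acc' \neq \emptyset$ and $w \in \inflang{\B}$. (If one does not insist on taking the least $i$, one must instead argue that \emph{some} prefix puts the $\D$-run in $\acc$ for the first time at or before position $i$, which again follows because once $\D$'s run on some prefix hits $\acc$, it has hit $\acc$; either way the argument goes through, and using the least $i$ is cleanest.)

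The only real subtlety — the "main obstacle," though it is minor — is making the equivalence "run of $\B$ on $w$ hits $\acc'$" $\iff$ "some nonempty prefix of $w$ is accepted by $\D$" fully rigorous: one must state precisely the lemma that the $\B$-run and the $\D$-run-on-prefixes coincide before the first visit to $\acc$, prove it by a straightforward induction on prefix length using determinism of $\D$ and the definition of $\trans'$, and be careful that $\acc = \acc'$ as sets of states and that positions/lengths are indexed consistently with the paper's $w[0\cdots i]$ notation. Everything else is immediate from $\finlang{\D} = [\phi]$ and the definition of $\models \exists\phi$.
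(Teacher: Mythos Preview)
Your approach is correct and essentially identical to the paper's: both arguments hinge on the observation that, because final states are made into sinks, the unique run of $\B$ on $w$ visits $\acc'$ (infinitely often, hence at all) iff some nonempty prefix of $w$ drives $\D$ into $\acc$, iff some prefix satisfies $\phi$; the paper presents this as a single chain of equivalences while you split it into two inclusions, but the content is the same.

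One small cleanup: your ``more carefully'' correction actually introduces an off-by-one error --- if $\rho = q_0 q_1 \cdots$ is the run on $w = a_0 a_1 \cdots$, then reading $w[0\cdots i-1]$ (length $i$) lands in $q_i$, not reading $w[0\cdots i]$; your first instinct was right, and the paper accordingly matches the accepting state $q_k$ with the prefix $a_0\cdots a_{k-1}$.
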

By making final states sink, every word accepted in $\B$ must have a prefix belonging to $[\phi]$.
Hence, the theorem follows.

\paragraph{$\forall \phi$.} Now we introduce the construction for $\forall\phi$.

\begin{figure}[t]
    \centering
    \includegraphics[width=1\linewidth]{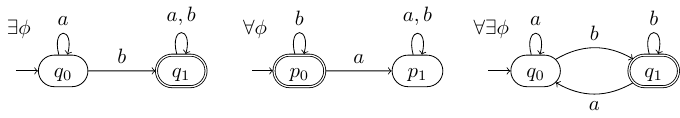}
    \caption{The corresponding \buchiaut automata constructed by our algorithm for $\exists\phi, \forall 
    \phi$ and $\forall \exists \phi$ where $\phi$ is as defined in Figure~\ref{fig:dfa-example}.}
    \label{fig:ba-construction}
\end{figure}

\begin{enumerate}\itemsep=0pt
\item First, let $\C =  (\states, \iota, \trans, \acc)$ be $\dfaBuild(\neg\phi)$ such that $\finlang{\C} = [\neg\phi]$.
\item Then, make all final states in $\C$ as sink final states, remove unreachable states and obtain the DFA $\C' = (Q', \iota, \trans', \acc')$
\item Finally, reverse the set of final states and read it as B\"uchi automaton $\B = (\states', \iota, \trans', \states'\setminus\acc')$.
\end{enumerate}

\begin{theorem}\label{lem:uni-lang}
    $\lang{\B} = \flang{\forall \phi}$.
\end{theorem}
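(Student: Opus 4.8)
The plan is to exploit that $\B$ has a \emph{deterministic} and \emph{complete} transition system, so that reasoning about acceptance reduces to tracking a single run. Since $\C = \dfaBuild(\neg\phi)$ is complete and deterministic, and the operations of step~2 (turning final states into sinks, removing unreachable states) preserve both properties, every infinite word $w \in \infwords$ admits a unique run $\rho_w = q_0 q_1 q_2 \cdots$ in $\B$, which coincides with its unique run in $\C'$. Note also that $q_0 \notin \acc'$ by the convention that $\dfaBuild$ rejects $\emptyword$.

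The first step is to establish the invariant behind the sink construction: the run $\rho_w$ eventually enters $\acc'$ --- and then stays in $\acc'$ forever, as final states are sinks --- if and only if some nonempty prefix $w[0\cdots i]$ of $w$ belongs to $\finlang{\C} = [\neg\phi]$. For the ``if'' direction, if $w[0\cdots i] \in \finlang{\C}$ then the run of $\C$ on this prefix passes through a final state, and since $\C'$ agrees with $\C$ up to the first final state reached, $\rho_w$ is inside $\acc'$ no later than position $i+1$. For the ``only if'' direction, let $j \geq 1$ be minimal with $q_j \in \acc'$; then the run of $\C$ on $w[0\cdots j-1]$ ends in $q_j \in \acc$, so $w[0\cdots j-1] \in \finlang{\C}$. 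The care needed here is that a run of the original $\C$ may visit a final state and then move on, so the claim has to be phrased via the \emph{first} position along the deterministic run at which a final state is met; after the sink modification no such escape exists, which is exactly what lets us equate ``a nonempty prefix of $w$ is in $[\neg\phi]$'' with ``$\rho_w$ gets trapped in $\acc'$''.

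The second step reads off the B\"uchi condition. The accepting set of $\B$ is $Q' \setminus \acc'$, and the states of $\acc'$ are absorbing sinks; hence $\rho_w$ visits $Q' \setminus \acc'$ infinitely often exactly when it never enters $\acc'$. Combining with the invariant, $w \in \lang{\B}$ iff $\rho_w$ never enters $\acc'$ iff no nonempty prefix of $w$ lies in $[\neg\phi]$.

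The final step invokes the semantics. Over nonempty finite traces $v \models_{+} \neg\phi$ iff $v \not\models_{+} \phi$, so $[\neg\phi] = \poswords \setminus [\phi]$; thus ``no nonempty prefix $w[0\cdots i]$ of $w$ is in $[\neg\phi]$'' is the same as ``$w[0\cdots i] \models_{+} \phi$ for every $i \geq 0$'', which is precisely the definition of $w \models \forall\phi$. Chaining the equivalences yields $\lang{\B} = \flang{\forall\phi}$. I expect the main obstacle to be the bookkeeping in the sink-construction invariant of the first step --- getting the indexing right and handling runs of $\C$ that pass through final states without stopping there --- while the remaining steps are a direct unwinding of the B\"uchi condition and of the \ltlfplus/\ppltlplus semantics.
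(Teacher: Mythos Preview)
Your proposal is correct and follows essentially the same approach as the paper: use determinism to get a unique run, argue that Büchi acceptance is equivalent to the run never entering the sink set $\acc'$, relate this to no nonempty prefix lying in $[\neg\phi]$, and conclude via the semantics of $\forall\phi$. Your write-up is more explicit than the paper's terse chain of equivalences about the bookkeeping (in particular the relationship between $\C$ and $\C'$ up to the first final state), but the underlying argument is the same.
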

By making final states of $\C$ sink states, every accepting run over a word $w$ in $\B$ does not visit those sink final states in $\C$, which then entails that no prefixes of $w$ belong to $[\neg\phi]$.
That is, all prefixes of $w$ belong to $\phi$. Then the theorem follows.
Our construction for $\forall\phi$ is quite similar to the one 
in~\cite{DBLP:conf/atva/BansalLTVW23}.

\paragraph{$\forall\exists \phi$.}
The construction for $\forall\exists\phi$ is simple and given below.
\begin{enumerate}
    \item First, let $\D = (Q, \iota, \trans, \acc)$ be the DFA $\dfaBuild(\phi)$.
    \item Then, read $\D$ as B\"uchi automaton $\B = (Q, \iota, \trans, \acc)$.
\end{enumerate}

\begin{theorem}\label{lem:uni-exist-phi}
    $\lang{\B} = \flang{\forall \exists \phi}$.
\end{theorem}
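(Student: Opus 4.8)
The plan is to unfold the definitions directly, exploiting that $\B$, being just $\dfaBuild(\phi)$ re-read with a B\"uchi condition, is deterministic and complete, so every infinite trace $w$ has exactly one run $\rho_w = q_0 q_1 q_2 \cdots$ in $\B$, with $q_0 = \iota$ and $q_{j+1} = \trans(q_j, w[j])$. The first step is to relate the states of this run to the prefixes of $w$: a straightforward induction on $j$ shows that $q_j$ is precisely the state reached by $\D$ after reading the length-$j$ prefix $w[0 \cdots j-1]$. Hence, for every $j \geq 1$, $q_j \in \acc$ iff $w[0 \cdots j-1] \in \finlang{\D} = [\phi]$, i.e.\ iff $w[0 \cdots j-1] \models_{+} \phi$; and $q_0 = \iota \notin \acc$ because $\emptyword \notin [\phi]$ by the convention on $\dfaBuild$.

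Next I would push the B\"uchi acceptance condition through this correspondence. By definition $w \in \lang{\B}$ iff $\inf(\rho_w) \cap \acc \neq \emptyset$, which (since $\states$ is finite) holds iff $q_j \in \acc$ for infinitely many indices $j$. By the previous paragraph this is equivalent to: $w[0 \cdots i] \models_{+} \phi$ for infinitely many $i \geq 0$ (reindexing $i = j-1$; the discarded index $j = 0$ is harmless since $q_0 \notin \acc$).

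It then remains to observe that the set $I_w = \{\, i \geq 0 : w[0 \cdots i] \models_{+} \phi \,\}$ being infinite is the same as its being unbounded, i.e.\ $\forall i \geq 0\, \exists j \geq i : w[0 \cdots j] \models_{+} \phi$ --- which is exactly the semantic clause defining $w \models \forall \exists \phi$, hence $w \in \flang{\forall\exists\phi}$. Chaining these equivalences yields $\lang{\B} = \flang{\forall\exists\phi}$.

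I do not expect a real obstacle here: the construction is deliberately trivial because $\forall\exists\phi$ is a recurrence property and recurrence properties are exactly those accepted by deterministic B\"uchi automata. The only point requiring a little care is the off-by-one bookkeeping between run positions and prefix lengths, together with the treatment of the empty prefix --- both handled by the $\dfaBuild$ convention that $\emptyword \notin [\phi]$, which guarantees $\iota \notin \acc$ and makes the reindexing clean. Note also that $\B$ is deterministic, hence GFM, so no separate GFMness argument is needed for this leaf formula.
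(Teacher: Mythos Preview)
Your proof is correct and follows essentially the same approach as the paper: both argue that, because $\B$ is deterministic, $w$ is accepted iff infinitely many prefixes of the unique run end in $\acc$, iff infinitely many prefixes of $w$ lie in $\finlang{\D} = [\phi]$, iff $w \models \forall\exists\phi$. You are simply more explicit about the indexing and the $\iota \notin \acc$ point than the paper's terse equivalence chain.
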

Theorem~\ref{lem:uni-exist-phi} clearly holds since every accepting run in $\B$ has a finite prefix run that is an accepting run in $\D$.
Figure~\ref{fig:ba-construction} shows the \buchiaut automata constructed for $\exists\phi, \forall 
    \phi$ and $\forall \exists \phi$.

\paragraph{$\exists\forall \phi$.}
The construction for $\exists\forall \phi$ is more involved and the flowchart is depicted below.
The idea is to first build the DCA $\A$ for $\exists\forall \phi$, which is also the DBA for $\forall\exists \neg \phi$ and then convert $\A$ to the desired LDBA $\B$ accepting $\flang{\exists\forall \phi}$.
In detail:
{\setlength{\intextsep}{5pt} 
\begin{figure}[h]
    \centering
    \resizebox{0.45\textwidth}{!}
    { 
\begin{tikzpicture}[node distance=.5cm]
\tikzstyle{set}=[rounded corners,draw=black!50,thick,inner sep=1pt,minimum width=1.45cm,minimum height=0.225cm]
\tikzstyle{operator}=[circle,draw=black!50,thick]
\begin{scope}
\node [draw, set]  (s1) {\begin{tabular}{c} DFA \\ $\mathcal{D}$ \end{tabular}};
\node [draw, set, right =2cm of s1] (s2) {\begin{tabular}{c} DFA \\ $\mathcal{C}$ \end{tabular}};
\node [draw, set, right =1.2cm of s2]  (s3) {\begin{tabular}{c} DCA \\ $\mathcal{A}$ \end{tabular}};
\node [draw, set, below =.1cm of s2]  (s4) {\begin{tabular}{c} LDBA \\ $\mathcal{B}'$ \end{tabular}};
\node [draw, set, left =2cm of s4]  (s5) {\begin{tabular}{c} LDBA \\ $\mathcal{B}$ \end{tabular}};
\draw[->] (s1) -- (s2) node[midway, above] {complement};
\draw[->] (s2) -- (s3) node[midway, above] {read as};
\draw[->] (s3.south) -- ++(0,0) |- (s4.east) node[pos=0.75, above] {convert};
\draw[->] (s4) -- (s5) node[midway, above] {complete};

\end{scope}

\end{tikzpicture}
} 
\label{fig:existsforall}
\end{figure}
\begin{enumerate}\itemsep=0pt
    \item First, complement $\D$ by reversing the set of final states, and obtain the DFA $\C = (Q, \iota, \trans, \states\setminus\acc)$ for $ \neg \phi$.
    \item Second, read $\C$ as a co-\buchiaut automaton $\A = ( Q, \iota, \trans, \states\setminus\acc)$.
    If we treat $\C$ as a \buchiaut automaton $\G$, by Theorem~\ref{lem:uni-exist-phi}, $\lang{\G} = \lang{\forall\exists \neg\phi}$.
    Since $\A$ is dual to $\G$, we immediately have $\lang{\A} = \infwords \setminus \lang{\G} = \lang{\exists\forall \phi}$.
    \item Third, convert $\A$ to a LDBA $\B' = (\states \times \setnocond{0,1}, \langle \iota, 0\rangle, \trans'', F\times \setnocond{1})$ where $\trans'' = \trans_0 \uplus \trans_j \uplus \trans_1$ is defined as follows:
    \begin{itemize}\itemsep=0pt
        \item $\langle q', 0\rangle = \trans_0(\langle q, 0\rangle, a)$ for all $q \in Q, q' \in Q$ and $a \in \alphabet$ with $\trans(q, a) = q'$,
        \item $\langle q', 1\rangle = \trans_1(\langle q, 1\rangle, a)$ for all $q \in \acc, q' \in \acc$ and $a \in \alphabet$ with $\trans(q, a) = q'$,
        \item $\langle q', 1\rangle = \trans_j(\langle q, 0\rangle, a)$ for all $q \in Q, q' \in F$ and $a \in \alphabet$ with $\trans(q, a) = q'$
    \end{itemize}
    \item Finally, complete the LDBA $\B'$ and obtain the result $\B = (\states', \iota, \trans', \acc' = \acc\times \{1\})$.
\end{enumerate}

An accepting run of $\G$ must visit $\states\setminus \acc$ infinitely often;
equivalently, an accepting run of $\A$ must visit only $\acc$-states from some point on.
The intuition behind the NBA $\B'$ is that it has to guess that point via the transition $\trans_j$.
Thus, before that point, $\B'$ stays within the component $\states\times\setnocond{0}$, and once the run of $\B'$ has entered the component $F \times \setnocond{1}$ via $\trans_j$, it visits only states in $\acc\times \setnocond{1}$ from that moment forward.

\begin{theorem}\label{lem:ex-uni-lang}
    $\lang{\B} = \flang{  \exists  \forall\phi}$.
\end{theorem}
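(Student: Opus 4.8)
The plan is to reduce the claim to the language-correctness of the last two construction steps, since Step~2 already pins down $\lang{\A}$. Concretely: $\C$ is deterministic and complete (it is $\dfaBuild(\neg\phi)$ with the final set reversed), so reading it as the B\"uchi automaton $\G$ gives $\lang{\G}=\flang{\forall\exists\neg\phi}$ by Theorem~\ref{lem:uni-exist-phi}; and since $\A$ is the co-B\"uchi automaton dual to the deterministic complete $\G$, it recognises $\infwords\setminus\lang{\G}=\flang{\neg(\forall\exists\neg\phi)}=\flang{\exists\forall\phi}$. Hence it suffices to show $\lang{\B}=\lang{\A}$, and as $\B$ is obtained from $\B'$ only by completion, I would split this into $\lang{\B'}=\lang{\A}$ and $\lang{\B}=\lang{\B'}$.

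For the LDBA step I would first record the structural invariant of $\B'$: the transitions $\trans_0$ keep the flag at $0$, the transitions $\trans_1$ keep it at $1$ and run only between $F$-states, and $\trans_j$ is the only way to move from flag $0$ to flag $1$, always landing in $F\times\{1\}$. Consequently every reachable state with flag $1$ lies in $F\times\{1\}$, there is no edge back to flag $0$, and any run of $\B'$ either stays in $Q\times\{0\}$ forever or takes $\trans_j$ exactly once and afterwards visits a state of $F\times\{1\}$ at every step. Moreover, erasing the flag from a run of $\B'$ on $w$ yields a run of the transition system of $\A$ on $w$, because each of $\trans_0,\trans_j,\trans_1$ follows an edge of $\trans$.

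With this in hand, $\lang{\B'}\subseteq\lang{\A}$ is immediate: an accepting run of $\B'$ on $w$ visits $F\times\{1\}$ infinitely often, hence by the invariant takes the jump at some step $k$ and from step $k+1$ on stays in $F\times\{1\}$; its flag-erasure is a run $\run=q_0q_1\cdots$ of $\A$ on $w$ with $q_j\in F$ for all $j\ge k+1$, so $\inf(\run)\cap(Q\setminus F)=\emptyset$ and $w\in\lang{\A}$. Conversely, for $w\in\lang{\A}$ let $\run=q_0q_1\cdots$ be the unique run of the deterministic complete $\A$ on $w$; by acceptance there is $N$ with $q_j\in F$ for all $j\ge N$. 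Setting $k=\max(N-1,0)$, the run of $\B'$ that uses $\trans_0$ on steps $0,\dots,k$ (available since $\trans_0$ mirrors the complete $\trans$), takes $\trans_j$ at step $k$ into $\langle q_{k+1},1\rangle$ (legal because $q_{k+1}\in F$), and then uses $\trans_1$ forever (legal because $q_j\in F$ for all $j\ge k+1$) is a run of $\B'$ on $w$ visiting $F\times\{1\}$ infinitely often, hence $w\in\lang{\B'}$. Finally, completing $\B'$ only redirects the missing transitions --- those at states $\langle q,1\rangle$ with $q\in F$ and $\trans(q,a)\notin F$ --- to a fresh rejecting sink, which neither removes accepted words nor adds any, since every newly enabled run is trapped in the sink and rejecting; thus $\lang{\B}=\lang{\B'}=\lang{\A}=\flang{\exists\forall\phi}$.

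I expect the LDBA-conversion step to be the only delicate part: one has to argue simultaneously that the nondeterministic jump can always be scheduled at the correct moment --- which the choice $k=\max(N-1,0)$ handles, including the corner case $N=0$ where the run already starts inside $F$ --- and that once the jump is taken the run is \emph{forced} to remain on $F$-states, which is exactly where the partiality of $\trans_1$ (being defined only between $F$-states) is used. Everything else is bookkeeping about the flag components and the rejecting sink.
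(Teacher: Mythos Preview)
Your argument is correct and mirrors the paper's own proof: both rest on the observation that an accepting run of $\B'$ is exactly one whose $Q$-projection eventually stays in $F$, and the paper simply compresses your two inclusions and the completion step into a single chain of biconditionals running directly from $\lang{\B'}$ to $\flang{\exists\forall\phi}$. (One harmless slip in your first paragraph: $\C$ is $\dfaBuild(\phi)$ with the final set reversed, not $\dfaBuild(\neg\phi)$ with the final set reversed; this does not affect anything you actually use.)
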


While the DBAs for $\forall\phi$ and $\exists\phi$ can be minimized in polynomial time~\cite{DBLP:journals/ipl/Loding01}, 
minimizing the ones  for $\forall\exists\phi$ and $\exists\forall\phi$ can be NP-complete~\cite{DBLP:conf/fsttcs/Schewe10}. Nonetheless, we can apply cheaper reduction operations such as \emph{extreme minimisation} \cite{DBLP:journals/ijfcs/Badr09} to $\B$.
While the construction gives a LDBA instead of DBAs, the resulting LDBA is also GFM.
\begin{theorem}\label{lem:ex-uni-gfm}
    The automata $\B$ for each of $\exists \phi, \forall \phi, \forall\exists \phi$, and $\exists\forall \phi$ are GFM.
\end{theorem}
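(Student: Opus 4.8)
The plan is to dispose of the three deterministic cases at once and then put all the effort into $\exists\forall\phi$. For $\exists\phi$, $\forall\phi$ and $\forall\exists\phi$, inspection of the constructions shows that $\B$ is obtained from a DFA only by operations --- making final states sink, deleting unreachable states, complementing or reversing the final set, re-reading the acceptance type --- none of which introduces nondeterminism, so $\B$ is a deterministic $\omega$-automaton. As already noted in the paper, $\psyn(\M,\B)=\psem(\M,\B)$ for every MDP $\M$ when $\B$ is deterministic, which is precisely GFMness; these three cases are thus immediate.

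For $\exists\forall\phi$ the automaton $\B$ is a genuine LDBA, and here the plan is to apply Theorem~\ref{lem:aec-sim} with the deterministic co-B\"uchi automaton $\A$ from the construction as the reference automaton. This is legitimate because $\A$ is deterministic, hence GFM, and $\lang{\A}=\flang{\exists\forall\phi}=\lang{\B}$: the first equality was argued when $\A$ was introduced, and the second is Theorem~\ref{lem:ex-uni-lang}. It then remains to give a winning strategy for Duplicator in the AEC-simulation game in which Spoiler moves in $\A$ and Duplicator in $\B$. Duplicator keeps its pebble in the initial copy $\states\times\setnocond{0}$ of $\B$, mirroring Spoiler's state via $\trans_0$; this is always possible since $\states\times\setnocond{0}$ is an exact deterministic copy of $\A$. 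If Spoiler never declares an AEC, clause~(1) gives Duplicator the win. Otherwise Spoiler makes an AEC claim while $\A$ is in some state $q$, naming the set $T$ of transitions that henceforth recur; we may assume the $\A$-run complies with this claim, since otherwise Duplicator wins by clause~(3). Consider two sub-cases. If some transition of $T$ is incident to a state of $\states\setminus\acc$, then a compliant run visits that rejecting state infinitely often, so the $\A$-run is non-accepting and Duplicator wins by clause~(4) while simply staying in the $0$-copy forever. Otherwise every transition of $T$ has both endpoints in $\acc$; in particular the first transition Spoiler uses after the claim has source $q$, whence $q\in\acc$, and Duplicator answers Spoiler's next letter $a$ with the jump $\trans_j$ from $\langle q,0\rangle$ to $\langle\trans(q,a),1\rangle$, which is available because $\trans(q,a)\in\acc$. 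From then on every state visited by the compliant $\A$-run is an endpoint of a transition of $T$, hence lies in $\acc$, so Duplicator can follow $\trans_1$ inside $\acc\times\setnocond{1}$ forever; the resulting run of $\B$ stays in the accepting set $\acc\times\setnocond{1}$ and is B\"uchi-accepting, so Duplicator wins by clause~(2). Duplicator thus wins in every case, $\B$ AEC-simulates $\A$, and Theorem~\ref{lem:aec-sim} gives that $\B$ is GFM.

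I expect the one delicate point to be the safety of Duplicator's jump in the last sub-case: that after a compliant AEC declaration whose transition set $T$ lies inside $\acc$, the $\A$-run never again leaves $\acc$. This is exactly what the AEC-simulation game is engineered to provide --- after the claim the run is permanently confined to the fixed set $T$, and a compliant run visits every transition of $T$, hence every state incident to $T$, infinitely often, so a $T$ meeting $\states\setminus\acc$ forces rejection and the only remaining case is $T\subseteq\acc$, where the run is trapped in $\acc$ and the deterministic accepting copy $\acc\times\setnocond{1}$ of the LDBA is a correct jump target. The completion step of the construction merely adds a sink that Duplicator never needs to enter, so it does not interfere; one could equally well argue the $\exists\forall\phi$ case by checking $\B$ against the standard sufficient conditions for a GFM LDBA, but the explicit Duplicator strategy above keeps the argument self-contained within the tools already set up in the paper.
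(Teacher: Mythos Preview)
Your proof is correct and follows essentially the same approach as the paper: the three deterministic cases are dismissed trivially, and for $\exists\forall\phi$ you run the AEC-simulation game against the DCA $\A$ with Duplicator mirroring in the $0$-copy and jumping after the claim. Your explicit case split on whether the claimed transition set $T$ meets $\states\setminus\acc$ is slightly more detailed than the paper's version (which simply has Duplicator jump ``at the earliest point'' after the claim and argues post hoc that this succeeds whenever Spoiler's run is accepting), but the underlying strategy and reasoning are the same.
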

\begin{proof}
In fact we have seen that the automaton $\B$ for $\exists \phi, \forall \phi, \forall\exists \phi$ is a DBA so trivially GFM. 
    We prove the result for the automaton $\B$ for $\exists\forall \phi$ by creating an AEC-simulation game between Spoiler and Duplicator.
    Spoiler will play on DCA $\A$ and Duplicator will play on $\B$.
    First, $\lang{\A} = \lang{\B} = [\exists\forall \phi]$.
    Moreover, $\A$ is a GFM automaton because $\A$ is deterministic.
    Therefore, if we can prove that $\B$ AEC-simulates $\A$, then $\B$ is also GFM according to theorem~\ref{lem:aec-sim}.

    Now we provide a winning strategy for Duplicator on $\B$ in the AEC-simulation game.
    Before Spoiler makes the AEC claim, Duplicator will take transitions within $\states \times \setnocond{0}$ via $\trans_0$.
    Once the AEC claim is made by Spoiler, the Duplicator will transition to $F \times \setnocond{1}$ via $\trans_j$ at the earliest point.
    Afterwards, Duplicator takes transitions within $\acc \times \setnocond{1}$ via $\trans_1$.
    The only situation where Spoiler wins is that the run $\rho$ in $\A$ over the chosen word $w$ is accepting.
    That is, $\rho$ cannot visit $Q\setminus F$ states any more after the AEC claim since the list of finite traces visited infinitely often is fixed.
    This then entails that the run $\widehat{\rho}$ constructed by Duplicator will also be accepting in $\B$ since the projection on $Q$ of $\widehat{\rho}$ is exactly $\rho$.
    It follows that Duplicator wins the game and $\B$ AEC-simulates $\A$.
    Hence, $\B$ is also GFM according to Theorem~\ref{lem:aec-sim}.
\end{proof}

\begin{figure}
    \centering
    \includegraphics[width=0.9\linewidth]{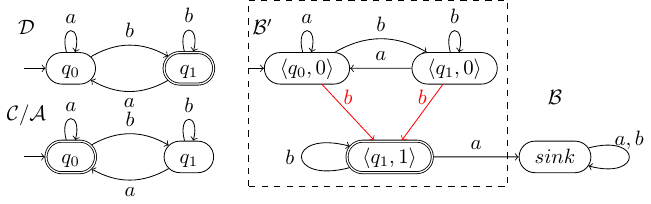}
    \caption{The DFA $\D$ for $\phi$ is on upper left, the DCA $\A$ (or DFA $\C$) with the set of rejecting states $\setnocond{q_0}$ is on lower left, $\A$ has language $\setnocond{a,b}^* \cdot b^{\omega}$ and the LDBA $\B'$ is the part within the dashed box and $\B$ is the complete LDBA where $\setnocond{\langle q_1, 1\rangle}$ is the sole accepting state marked with double rounded boxes and the jump transitions in $\trans_j$ are drawn in red colour.}
    \label{fig:dca-ldba-conversion}
\end{figure}

Following example, illustrates why the constructed LDBA $\B$ for $\exists\forall\phi$ is GFM.

\begin{example}
    Consider the DFA $\D$ for $\phi$ on the upper left of Figure \ref{fig:dca-ldba-conversion}. The DCA $\A$ (or DFA $\C$) with rejecting states $\setnocond{q_0}$ is shown on the lower left, and the complete LDBA $\B$ is on the right (LDBA $\B'$ is within the dashed box). The sole accepting state $\setnocond{\langle q_1, 1\rangle}$ is marked with double rounded boxes, and jump transitions in $\trans_j$ are in red.
    A winning strategy for the Duplicator allows $\B$ to AEC-simulate $\A$. Let $\trans$ and $\trans'$ be the transition functions of $\A$ and $\B$, respectively. The strategy $\sigma$ works as follows:  
    (1) Before the AEC-claim, $\sigma(u) = \trans'(\langle q_0, 0\rangle, u)$;  
    (2) At the AEC claim, $\sigma$ takes the jump transition upon reading $b$ as soon as possible;  
    (3) Afterwards, $\sigma$ uses $\trans'$ for successors.  
    The Spoiler can win only by making an AEC claim and forcing $\A$ to stay in $q_1$ forever. In this case, $\sigma$ ensures an accepting run.
\end{example}

\subsection{Boolean Combinations of GFM Automata}
\label{sec:bool-buchi}

Now, we show that GFM automata are closed under union and intersection.
Let $\A_0$ and $\A_1$ be two GFM \buchiaut automata.
First we introduce the union operation for $\A_0$ and $\A_1$.
\begin{proposition} 
\label{prop:ba-intersection-automaton}
	Given two \buchiaut automata $\A_{0} = (\states_{0}, \initState_{0}, \trans_{0}, \acc_{0})$ and $\A_{1} = (\states_{1}, \initState_{1}, \trans_{1}, \acc_{1})$, let $\A = (\states, \initState, \trans, \acc)$ be the \buchiaut automaton where 
		$\states = \states_{0} \times \states_{1}$, 
		$\initState = (\initState_{0}, \initState_{1})$,
		$\trans(\langle q_{0}, q_{1}\rangle,a) = \trans_0(q_0,a) \times \trans_1(q_1,a)$, and
		$\acc = \states_0 \times \acc_{1} \cup \acc_{0} \times \states_{1}$.
	Then, $\lang{\A} = \lang{\A_{0}} \cup \lang{\A_{1}}$ with $|\states| = |\states_{0}| \cdot |\states_{1}|$.
\end{proposition}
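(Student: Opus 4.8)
The plan is to verify the two claims separately. The state-count equality $|\states| = |\states_0|\cdot|\states_1|$ is immediate from the definition $\states = \states_0 \times \states_1$, so all the work lies in proving $\lang{\A} = \lang{\A_0} \cup \lang{\A_1}$, which I would establish by two inclusions. Throughout I rely on the standing assumption that all automata are complete, and on the elementary fact that a run hitting a finite union of state sets infinitely often must hit one of them infinitely often.

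For $\lang{\A_0} \cup \lang{\A_1} \subseteq \lang{\A}$, I would fix $w \in \lang{\A_0}$ (the case $w \in \lang{\A_1}$ being symmetric) and let $\rho_0 = q_0^{(0)} q_1^{(0)} \cdots$ be an accepting run of $\A_0$ on $w$, so $\inf(\rho_0) \cap \acc_0 \neq \emptyset$. By completeness of $\A_1$ there is some run $\rho_1 = q_0^{(1)} q_1^{(1)} \cdots$ of $\A_1$ on $w$. The interleaved sequence $\rho = \langle q_0^{(0)}, q_0^{(1)}\rangle \langle q_1^{(0)}, q_1^{(1)}\rangle \cdots$ is then a run of $\A$ on $w$, since $\langle q_{i+1}^{(0)}, q_{i+1}^{(1)}\rangle \in \trans_0(q_i^{(0)}, w[i]) \times \trans_1(q_i^{(1)}, w[i]) = \trans(\langle q_i^{(0)}, q_i^{(1)}\rangle, w[i])$ for every $i$. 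Because $\rho_0$ visits $\acc_0$ infinitely often, $\rho$ visits $\acc_0 \times \states_1 \subseteq \acc$ infinitely often, hence $\rho$ is accepting and $w \in \lang{\A}$.

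For the converse inclusion $\lang{\A} \subseteq \lang{\A_0} \cup \lang{\A_1}$, I would take $w \in \lang{\A}$ with an accepting run $\rho$, and write each of its states as a pair; by the product form of $\trans$, the sequences of first and second components are runs $\rho_0$ of $\A_0$ and $\rho_1$ of $\A_1$ on $w$, respectively. Since $\rho$ visits $\acc = (\states_0 \times \acc_1) \cup (\acc_0 \times \states_1)$ infinitely often, at least one of the two sets in this union is visited infinitely often: in the first case $\rho_1$ visits $\acc_1$ infinitely often, so $w \in \lang{\A_1}$; in the second case $\rho_0$ visits $\acc_0$ infinitely often, so $w \in \lang{\A_0}$. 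Either way $w \in \lang{\A_0} \cup \lang{\A_1}$.

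The only delicate point — which I would flag explicitly rather than treat as an obstacle — is the use of completeness in the first inclusion: the product $\A$ admits a run on $w$ only when both $\A_0$ and $\A_1$ do, so without completeness an accepting word of $\A_0$ having no run in $\A_1$ would be wrongly excluded from $\lang{\A}$. Since completeness is a global assumption in the paper, this is automatically satisfied; it is, however, precisely why the construction is phrased for complete automata. Everything else is bookkeeping about projecting runs of $\A$ onto the components and lifting component runs back, together with the pigeonhole observation on the two-element union defining $\acc$.
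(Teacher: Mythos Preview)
Your proof is correct and follows essentially the same approach as the paper, which treats the proposition as folklore and offers only a one-sentence justification: runs of $\A$ are exactly products $\rho_0 \times \rho_1$ of runs of the components, and if one component run is accepting then so is the product. Your write-up is considerably more careful than the paper's sketch---you handle both inclusions explicitly, invoke the pigeonhole argument for the converse direction, and flag the role of completeness---but the underlying idea is identical.
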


This union operation is just a Cartesian product of $\A_0$ and $\A_1$.
Let $w \in \finwords$.
For the word $w$, a run $\rho_0$ of $\A_0$ and a run $\rho_1$ of $\A_1$ constitute a run $\A$ in the form of $\rho_0 \times \rho_1$.
So, if one of the runs is accepting, $\rho_0\times \rho_1$ is also accepting.
We show below that the union automaton is also GFM.

\begin{theorem}\label{lem:union-gfm}
    If $\A_0$ and $\A_1$ are both GFM, then the union automaton $\A$ is also GFM. 
\end{theorem}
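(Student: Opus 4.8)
The plan is to use the AEC-simulation machinery from Theorem~\ref{lem:aec-sim} together with the observation in Theorem~\ref{lem:bisim}. The union automaton $\A$ from Proposition~\ref{prop:ba-intersection-automaton} satisfies $\lang{\A} = \lang{\A_0} \cup \lang{\A_1}$, so it is language-equivalent to \emph{some} automaton, but neither $\A_0$ nor $\A_1$ alone is language-equivalent to $\A$; hence we cannot directly invoke Theorem~\ref{lem:aec-sim} against $\A_0$ or $\A_1$. Instead, I would argue directly from the definition of GFM: fix an arbitrary finite MDP $\M$; we must show $\psem(\M,\A) = \psyn(\M,\A)$. Since $\psyn(\M,\A) \le \psem(\M,\A)$ always holds, the work is to show $\psyn(\M,\A) \ge \psem(\M,\A)$.

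First I would unpack $\psem(\M,\A)$. Let $\sigma$ be a near-optimal strategy for $\M$ achieving (up to $\varepsilon$) $\psem(\M,\A) = \sup_\sigma \pp_{\M^\sigma}(\lang{\A})$, and consider the induced Markov chain $\M^\sigma$. A trace $\lab(\xi)$ lies in $\lang{\A} = \lang{\A_0}\cup\lang{\A_1}$ iff it lies in $\lang{\A_0}$ or in $\lang{\A_1}$. Partition the event $\{\lab(\xi)\in\lang{\A}\}$ into $E_0 = \{\lab(\xi)\in\lang{\A_0}\}$ and $E_1 = \{\lab(\xi)\in\lang{\A_1}\}\setminus E_0$. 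By a union bound, $\pp_{\M^\sigma}(\lang{\A}) \le \pp_{\M^\sigma}(\lang{\A_0}) + \pp_{\M^\sigma}(\lang{\A_1})$, but this is too lossy; the cleaner route is to work leaf-component by leaf-component. By Theorem~\ref{thm:end-component}, the Markov chain $\M^\sigma$ almost surely enters a leaf component $C$, and once inside $C$ the set of traces generated is determined (each leaf component $C$ either has all its traces in $\lang{\A_0}$, or all in $\lang{\A_1}$, or none in $\lang{\A}$ — this follows because membership in an $\omega$-regular language is a tail property and a leaf component visits a fixed set of states/labels infinitely often; more carefully, within $C$ the probability that a trace is in $\lang{\A_i}$ is either $0$ or $1$ by the $0$--$1$ law for Markov chains). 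So the probability mass contributing to $\psem(\M,\A)$ splits as a sum over leaf components $C$ with $\lang{\A}$-traces, each of which "belongs to" either $\A_0$ or $\A_1$.

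Next I would transfer each such leaf component to a syntactic accepting MEC in $\M\times\A$. Take a leaf component $C$ of $\M^\sigma$ whose traces lie in $\lang{\A_0}$ (the $\A_1$ case is symmetric). Since $\A_0$ is GFM, there is a strategy realizing $\psyn(\M|_C, \A_0) = \psem(\M|_C,\A_0)$, i.e., a way to resolve $\A_0$'s nondeterminism inside the product $\M\times\A_0$ so that an accepting run is produced with the same probability. Now the key point: a run of $\A = \A_0\times\A_1$ (Cartesian product) projects to a run of $\A_0$ and a run of $\A_1$, and by the acceptance condition $\acc = \states_0\times\acc_1 \cup \acc_0\times\states_1$, if the $\A_0$-projection is accepting then the whole run is accepting regardless of what happens in the $\A_1$-component. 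So Duplicator/the strategy can run the GFM resolution of $\A_0$ in the first coordinate and let the second coordinate do whatever (it is deterministic-or-not, irrelevant). This yields a strategy on $\M\times\A$ that, restricted to the portion of probability mass flowing into $C$, produces accepting runs with probability equal to that mass. Assembling these local strategies over all leaf components gives a global strategy on $\M\times\A$ witnessing $\psyn(\M,\A) \ge \psem(\M,\A) - \varepsilon$ for every $\varepsilon$, hence $\psyn(\M,\A) = \psem(\M,\A)$, so $\A$ is GFM.

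The main obstacle I anticipate is making the leaf-component decomposition rigorous: specifically, the claim that within a leaf (bottom) component of $\M^\sigma$ the event "trace $\in \lang{\A_i}$" has probability $0$ or $1$, and that one can then "stitch together" the separate GFM resolutions of $\A_0$ and $\A_1$ across different leaf components into a single coherent strategy on $\M\times\A$. The $0$--$1$ part is a standard ergodicity/tail-event argument for finite Markov chains, but it needs to be stated carefully because $\lang{\A_i}$ is a property of the \emph{label} sequence, not the state sequence, so one must note that the label sequence of a trajectory confined to a bottom component is itself generated by an ergodic finite Markov chain and membership in an $\omega$-regular set is shift-invariant. The stitching part is cleaner because the decision of which leaf component is reached is made by $\M^\sigma$ before entering it, so the product strategy can be defined to "commit" to either the $\A_0$-resolution or the $\A_1$-resolution upon identifying the component — this is exactly the kind of reasoning the AEC-simulation game formalizes, so an alternative (and perhaps cleaner) write-up would phrase the whole argument as an AEC-simulation game in which Spoiler plays on a deterministic automaton equivalent to $\A$ and Duplicator plays on $\A$, picking the coordinate according to Spoiler's AEC claim; but the direct MDP argument above is self-contained given Theorems~\ref{thm:end-component} and the definition of GFM.
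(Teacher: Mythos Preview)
Your high-level plan---run the GFM resolutions of $\A_0$ and $\A_1$ independently in the two coordinates of the Cartesian product, so that if either projection accepts then the $\A$-run accepts---is correct and is exactly the core idea the paper uses. The gap is in how you justify that the two resolutions can be made to cover the full event $\{\text{trace}\in\lang{\A}\}$.

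Your assertion that ``membership in an $\omega$-regular language is a tail property'' (equivalently ``shift-invariant'') is false: $\{a\}\cdot\Sigma^\omega$ is $\omega$-regular but depends only on the first letter. Hence your $0$--$1$ claim for leaf components of $\M^\sigma$ does not hold. A single leaf component $C$ can be reached via different finite prefixes, and whether the resulting full trace lies in $\lang{\A_i}$ may genuinely depend on that prefix; conditioning on reaching $C$ need not make $\pp[\text{trace}\in\lang{\A_i}]$ equal to $0$ or $1$. The ``stitching'' step then loses its footing: you cannot decide to commit to the $\A_0$-resolution upon entering $C$, because the $\A_0$-run has to have been following the trace from time $0$, and GFMness of $\A_0$ only furnishes a strategy on the whole product $\M\times\A_0$, not a late-binding choice per leaf component.

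The paper closes precisely this gap by introducing deterministic Rabin automata $\R_0,\R_1$ equivalent to $\A_0,\A_1$ and working on $\M'=\M^\sigma\times\T_0\times\T_1$ (with $\T_c$ the transition system of $\R_c$). Now a leaf component of $\M'$ \emph{does} determine acceptance by each $\R_c$, because the DRA state records all the relevant history; Theorem~\ref{lem:bisim} then supplies strategies $\sigma_c$ on $\M'\times\A_c$ that AEC-simulate $\M'\times\R_c$, and these are combined coordinatewise exactly as you outline. Your closing parenthetical---play the AEC-simulation game with Spoiler on a deterministic automaton equivalent to $\A$---is in fact the right route and is essentially what the paper does; promote it from an aside to the main argument, with $\R=\R_0\times\R_1$ as that deterministic automaton.
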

\begin{proof}
Let $\M = (S, \act, \prob, s_0, \lab)$ be an MDP. Our proof goal is to show that $\psyn(\M, \A) = \psem(\M, \A)$.
Then, we can just prove that $\psyn(\M, \A) \geq \psem(\M, \A)$.
We will prove it with the help of equivalent DRAs of $\A_0$ and $\A_1$.

Let $\R_0 = (\states'_0, \trans'_0, \initState'_0, \acccond'_0)$ and $\R_1 = (\states'_1, \trans'_1,\initState'_1, \acccond'_1)$ be two DRAs that are language-equivalent to $\A_0$ and $\A_1$, respectively.
Let $\R=\R_0 \times \R_1$ be the union DRA of $\R_0$ and $\R_1$ such that $\lang{\R} = \lang{\R_0} \cup \lang{\R_1}$.
Formally, $\R$ is a tuple $(\states' = \states'_0\times\states'_1, \trans', \initState'=\langle \initState'_0,\initState'_1\rangle,\acccond')$ where $\trans'(\langle q_0,q_1\rangle,a) = \langle \trans'_0(q_0, a), \trans'_1(q_1, a)\rangle$ for each $\langle q_0,q_1\rangle \in \states'$ and $a \in \alphabet$, and $\acccond' = \bigcup_{i=1}^{k_0}\setnocond{(B_i\times \states'_1, G_i\times\states'_1) : (B_i, G_i)\in \acccond'_0} \cup \bigcup_{i=1}^{k_1} \setnocond{(\states'_0\times B_i, \states'_0\times G_i) : (B_i, G_i)\in \acccond'_1}$.
Let $w \in \infwords$, and $\rho_0$ and $\rho_1$ are the runs over $w$ in $\R_0$ and $\R_1$, respectively.
The run of $\R = \R_0\times\R_1$ over $w$ is actually the product $\rho_0 \times \rho_1$.
Moreover, if $w \in \lang{\A}$, then the run $\rho_0 \times \rho_1$ satisfies either $\acccond'_0$ or $\acccond'_1$, which indicates that $\rho_0 \times \rho_1$ is accepting in $\R $.
Then, it follows that $\psem(\M, \R) = \psem(\M, \A)$ as $\lang{\A} = \lang{\R}$.
As $\R$ is deterministic, we have $\psem(\M, \R) = \psyn(\M, \R)$.
Therefore, we only need to prove that $\psyn(\M, \A) \geq \psyn(\M, \R)$.

Let $\sigma$ be the \emph{optimal} strategy on $\M$ to obtain the maximal satisfaction probability for $\lang{\A}$, i.e., \[\pp_{\M^{\sigma}}(\lang{\A}) = \sup_{\sigma'}\pp\setnocond{ \xi \in \Omega_{\sigma'}^{\M}(s_0): \lab(\xi) \in \lang{\A}}.\]

Note again that here $\sigma$ is usually not a positional strategy for $\M$ and needs extra memory to store history traces.

We denote by $\T_0$, $\T_1$ and $\T_0\times\T_1$ the TSes of $\R_0$, $\R_1$, and $\R$ respectively. 
We now work on the large Markov chain $\M' = \M^\sigma \times \T_0\times \T_1$ by ignoring the acceptance conditions where $\M^{\sigma}$ is already an MC.
Thus, $\M'$ has only probabilistic choices.
Since $\R = \R_0 \times \R_1$ is deterministic, we have that 
\[ \psem(\M, \A) = \psyn(\M, \R) = \psyn(\M', \R).\]

Let $c \in \{0,1\}$.
We know that $\A_c$ is GFM.
According to Theorem~\ref{lem:bisim}, there is an optimal strategy $\sigma_c$ for $\M'\times\A_c$ to AEC-simulate $\M' \times \R_c$.
Thus, we construct the optimal strategy $\sigma^*$ for $\M \times \A$ by building the product of three strategies $\sigma$ for $\M$, $\sigma_0$ and $\sigma_1$, which resolves the nondeterminism of $\M$, $\A_0$ and $\A_1$, respectively, independently in $\M' \times \A$ (viewing $\A$ as the cross product of $\A_0$ and $\A_1$).
That is, in this case, $\sigma_c$ works independently from $\sigma_{1-c}$ on $\M^{\sigma} \times \T_0 \times \T_1 \times \A_c$ and the state space of $\T_0 \times \T_1$ will be used as extra memory for $\sigma_c$ in addition to store states from $\M^{\sigma}$ and $\A_c$.

Then, in the AEC-simulation game, whenever Spoiler produces an accepting run in $\M'\times\R$, we can use $\sigma^*$ to construct an accepting run for Duplicator in $\M'\times \A_0 \times \A_1 = \M' \times \A$.
So, there is a strategy $\sigma^*$ on $\M\times \A$ to achieve $\psyn(\M, \A) = \psyn(\M', \A)\geq \psyn(\M', \R)$.
It then follows that $\psyn(\M, \A) \geq \psyn(\M', \R) = \psem(\M, \A)$.
That is, $\psyn(\M,\A) = \psem(\M,\A)$ for any given MDP $\M$.
Therefore, $\A$ is also GFM.
\end{proof}

We illustrate why the union product $\A$ is GFM if both $\A_0$ and $\A_1$ are GFM with an example.

\begin{example}
    Consider the GFM NBAs $\A_0$ and $\A_1$ on the left of Figure \ref{fig:example-union}, and their union product $\A$ on the right.  
    $\A_0$ accepts $(\setnocond{a,b}^*\cdot b)^{\omega}$, and $\A_1$ accepts $(\setnocond{a,b}^*\cdot a)^{\omega}$. Both are GFM as they have strategies to generate an accepting run for every accepting word.
    The winning strategy stays in the initial state before the AEC-claim and then aims to visit the accepting state repeatedly. The union product $\A$ is also GFM, with a strategy $\sigma$ enabling the Duplicator to win the AEC-simulation game against any deterministic automaton $\D$.  
    The strategy $\sigma$ works as follows:  
    Before the AEC claim, $\sigma$ keeps the run in the initial state $\langle p_0, q_0\rangle$.  
    After the AEC claim, $\sigma$ exits $\langle p_0, q_0\rangle$ and confines the run to $\setnocond{\langle p_0, q_1\rangle, \langle p_1, q_0\rangle}$.  
    For any accepting infinite word chosen by the Spoiler, $\sigma$ ensures an accepting run. In fact, $\sigma$ combines the winning strategies of $\A_0$ and $\A_1$. Hence, $\A$ is GFM.
\end{example}

\begin{figure}
    \centering
    \includegraphics[width=0.9\linewidth]{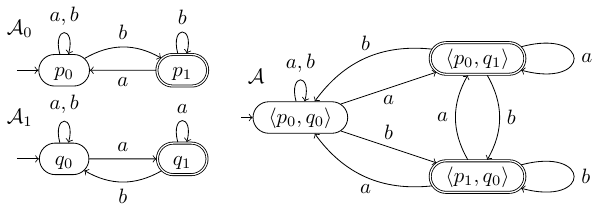}
    \caption{The GFM NBAs $\A_0$ and $\A_1$ are depicted on the left and their union product $\A$ is depicted on the right. The accepting states are marked with double rounded boxes.}
    \label{fig:example-union}
\end{figure}

The Cartesian product is necessary for $\A$ being GFM.
The normal union product only adds an extra initial state to nondeterministically select one of the two automata for the subsequent run, which has only $|\states_1| + |\states_2| + 1$ states.
However, the product might produce an automaton that is not GFM~\cite{DBLP:conf/concur/Schewe0Z23}. 

Next, we introduce the intersection operation for the GFM automata and show that the result automaton is also GFM.
\begin{proposition}[\cite{DBLP:reference/mc/Kupferman18}] 
\label{prop:ba-intersection-automaton}
	Given two \buchiaut automata $\A_{0} = (\states_{0}, \initState_{0}, \trans_{0}, \acc_{0})$ and $\A_{1} = (\states_{1}, \initState_{1}, \trans_{1}, \acc_{1})$, let $\A = (\states, \initState, \trans, \acc)$ be the intersection \buchiaut automaton whose components are defined as follows:  
	\begin{itemize}\itemsep=0pt
    \item $\states = \states_{0} \times \states_{1} \times \setnocond{0,1}$,
    \item $\initState = (\initState_{0}, \initState_{1}, 0)$, 
	\item For a state $\langle q_0, q_1, c\rangle$ and letter $a \in \alphabet$, we have $\langle q'_0,q'_1,\nextop(q_{0}, q_{1}, c)\rangle \in \trans(\langle q_{0}, q_{1}, c\rangle, a)$ where $q'_0 \in \trans_0(q_0, a), q'_1\in\trans_1(q_1,a)$, $\nextop \colon \states_{0} \times \states_{1} \times \setnocond{0,1} \to \setnocond{0,1}$ is defined as 
		\[
			\nextop(q_{0}, q_{1}, c) = 
			\begin{cases}
				1 - c & \text{if $q_{c} \in \acc_{c}$,} \\
				c & \text{otherwise;}
			\end{cases}
		\]
	\item
		$\acc = \acc_{0} \times \states_{1} \times \setnocond{0}$.
	\end{itemize}

	Then, $\lang{\A} = \lang{\A_{0}} \cap \lang{\A_{1}}$ with $|\states| = 2 \cdot |\states_{0}| \cdot |\states_{1}|$.
\end{proposition}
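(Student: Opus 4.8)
The plan is to verify the state-count and then prove the two language inclusions, the whole argument resting on a single structural fact about the tracking bit $c$. The bound $|\states| = 2\cdot|\states_0|\cdot|\states_1|$ is immediate from $\states = \states_0\times\states_1\times\setnocond{0,1}$. For the language claim, I would first read off from the definition of $\nextop$ the following: along any run of $\A$, the third component flips from $0$ to $1$ precisely at the steps whose $\states_0$-component lies in $\acc_0$, flips from $1$ to $0$ precisely at the steps whose $\states_1$-component lies in $\acc_1$, and is unchanged everywhere else; in particular, if $c$ is eventually constant with value $0$ then from that point on the $\states_0$-component avoids $\acc_0$, and if it is eventually constant with value $1$ then the $\states_1$-component avoids $\acc_1$. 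This observation, together with the fact that $\acc = \acc_0\times\states_1\times\setnocond{0}$, is all that drives both directions.

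For $\lang{\A}\subseteq\lang{\A_0}\cap\lang{\A_1}$: take an accepting run $\rho$ of $\A$ on $w$ and let $\rho_0,\rho_1$ be its projections to the first two coordinates, which are legal runs of $\A_0,\A_1$ on $w$ because $\trans$ imposes no constraint linking the two component-moves. Since $\acc$ forces $c=0$, an accepting $\rho$ cannot have $c$ eventually constant at $1$; and by the observation it cannot have $c$ eventually constant at $0$ either, since then $\rho_0$ would avoid $\acc_0$ and so $\rho$ would avoid $\acc$. Hence $c$ flips infinitely often in both directions; each $0\to1$ flip witnesses a visit of $\rho_0$ to $\acc_0$ and each $1\to0$ flip a visit of $\rho_1$ to $\acc_1$, so $\rho_0$ and $\rho_1$ are both accepting and $w\in\lang{\A_0}\cap\lang{\A_1}$.

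For the reverse inclusion, given accepting runs $\rho_0$ of $\A_0$ and $\rho_1$ of $\A_1$ on $w$, I would pair them position-by-position and let $c$ evolve deterministically from $c=0$ via $\nextop$; the resulting sequence is a legal run $\rho$ of $\A$ on $w$ starting in $\initState$. It remains to see that $\rho$ hits $\acc$ infinitely often, and this is the only point where anything could go wrong: I expect the main (and really the only) obstacle to be ruling out that the bit $c$ ``starves''. But if $c$ were eventually constant at $0$, then $\rho_0$ would eventually avoid $\acc_0$, contradicting acceptance of $\rho_0$; if eventually constant at $1$, then $\rho_1$ would eventually avoid $\acc_1$, contradicting acceptance of $\rho_1$. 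So $c$ flips infinitely often, and each $0\to1$ flip occurs at a state in $\acc_0\times\states_1\times\setnocond{0}=\acc$; hence $\rho$ is accepting and $w\in\lang{\A}$. Everything else is routine bookkeeping about projections and the case split on $\nextop$, so correctness of the construction reduces to this non-starvation argument plus the two accepting-run hypotheses.
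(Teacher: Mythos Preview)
Your proof is correct and follows the standard argument for this well-known construction. The paper does not give its own proof of this proposition---it cites \cite{DBLP:reference/mc/Kupferman18} and only offers the one-sentence intuition that the bit $c$ alternately waits for $\acc_0$-visits and $\acc_1$-visits, which is exactly the non-starvation mechanism you spell out in detail.
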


This intersection construction is fairly standard~\cite{DBLP:reference/mc/Kupferman18}. The intuition is to alternatively look for accepting states from $\A_0$ when $c$ is set to $0$ and for accepting states in $\A_1$ when $c$ is $1$.
In this way, the accepting run of $\A$ must visit accepting states from both $\A_0$ and $\A_1$ infinitely often.
The resulting automaton is an LDBA.

Similarly, we show that the GFM automata are also closed under the intersection operation by Theorem~\ref{lem:intersection-gfm}.
    
\begin{theorem}\label{lem:intersection-gfm}
    If $\A_0$ and $\A_1$ are GFM, then the intersection automaton $\A$ is also GFM.
\end{theorem}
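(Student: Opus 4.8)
The plan is to follow the argument of Theorem~\ref{lem:union-gfm} almost verbatim, adapting it from the union construction to the round-robin intersection construction of Proposition~\ref{prop:ba-intersection-automaton}. Fix an MDP $\M = (S,\act,\prob,s_0,\lab)$; since $\psyn(\M,\A)\le\psem(\M,\A)$ always holds, it suffices to show $\psyn(\M,\A)\ge\psem(\M,\A)$. First I would choose DRAs $\R_0$ and $\R_1$, with transition systems $\T_0$ and $\T_1$, that are language-equivalent to $\A_0$ and $\A_1$ (they exist by determinisation), and form the \emph{deterministic} automaton $\R$ whose transition system is $\T_0\times\T_1$ and whose run is accepting iff it satisfies both $\acccond'_0$ and $\acccond'_1$. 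Then $\lang{\R} = \lang{\R_0}\cap\lang{\R_1} = \lang{\A_0}\cap\lang{\A_1} = \lang{\A}$. This acceptance condition is prefix-independent, so Theorem~\ref{thm:end-component} still gives $\psem(\M,\R) = \psyn(\M,\R) = \sup_\sigma\pp_{(\M\times\R)^\sigma}(\diamond X)$ with $X$ the union of the accepting MECs of $\M\times\R$ (alternatively one may apply an index-appearance-record step to turn $\R$ into a genuine DRA, which changes nothing below). Hence $\psem(\M,\A) = \psem(\M,\R) = \psyn(\M,\R)$.

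Next, as in the union proof, let $\sigma$ be an optimal strategy of $\M$ for $\lang{\A}$ and pass to the Markov chain $\M' = \M^\sigma\times\T_0\times\T_1$ obtained by ignoring acceptance; since $\R$ is deterministic and its transition system sits inside $\M'$, we get $\psem(\M,\A) = \psyn(\M,\R) = \psyn(\M',\R)$. For each $c\in\setnocond{0,1}$ the automaton $\A_c$ is GFM and language-equivalent to the (deterministic, hence GFM) automaton $\R_c$, so by Theorem~\ref{lem:bisim} there is a strategy $\sigma_c$ for $\M'\times\A_c$ that AEC-simulates $\M'\times\R_c$, using the states of $\T_0\times\T_1$ already present in $\M'$ as extra memory. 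I would then build a single strategy $\sigma^*$ for $\M\times\A$ from $\sigma$, $\sigma_0$ and $\sigma_1$: reading $\A$ as $\A_0\times\A_1$ carrying the flag $c$, the nondeterminism of $\M$ is resolved by $\sigma$, that of $\A_0$ by $\sigma_0$ and that of $\A_1$ by $\sigma_1$, while the flag is updated deterministically via $\nextop$ and thus needs no resolution, only one extra bit of memory. Because $\M'$ contains both $\T_0$- and $\T_1$-states, $\sigma_0$ and $\sigma_1$ each see their own projection of the history and act independently, so the AEC-simulation guarantee of $\sigma_c$ holds regardless of what $\sigma_{1-c}$ does.

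It remains to check that $\sigma^*$ maps accepting runs of $\M'\times\R$ to accepting runs of $\M'\times\A$. Take a run of the Markov chain $\M'\times\R$ that is accepting: by Theorem~\ref{thm:end-component} it eventually settles into an accepting MEC, which fixes the set of transitions visited infinitely often; projecting this onto the two components yields the AEC claims for the two sub-games, and since the conjunctive condition holds, \emph{both} the $\R_0$-run and the $\R_1$-run are accepting. Then the AEC-simulation property of $\sigma_c$ makes the run it builds in $\A_c$ accepting, i.e.\ it visits $\acc_c$ infinitely often, for each $c$. By the definition of $\nextop$ the flag therefore alternates infinitely often between $0$ and $1$, so the run of $\A$ visits $\acc = \acc_0\times\states_1\times\setnocond{0}$ infinitely often and is accepting. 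Coupling $\sigma^*$ with the Markov chain $\M'\times\R$ through their shared $\M'$-component, this shows $\psyn(\M',\A)\ge\psyn(\M',\R)$, and prefixing $\sigma$ gives $\psyn(\M,\A)\ge\psyn(\M',\A)\ge\psyn(\M',\R) = \psem(\M,\A)$; as $\M$ was arbitrary, $\A$ is GFM.

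I expect the crux to be exactly this last step: unlike the union case, where it sufficed that one of $\sigma_0,\sigma_1$ succeed while the other behaved arbitrarily, here both AEC-simulation strategies must succeed \emph{simultaneously}, and one must additionally observe that two infinitely-often visits --- of $\acc_0$ and of $\acc_1$ --- force the round-robin flag to flip infinitely often and hence the intersection automaton to accept. The only other, purely technical, wrinkle is that an intersection of Rabin conditions is not Rabin, which is handled either by working directly with the prefix-independent conjunctive condition or by an index-appearance-record conversion; neither affects the argument.
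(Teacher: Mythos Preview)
Your proposal is correct and follows essentially the same route as the paper's proof: fix an optimal $\sigma$ on $\M$, pass to the Markov chain $\M'=\M^\sigma\times\T_0\times\T_1$, invoke Theorem~\ref{lem:bisim} to obtain independent AEC-simulation strategies $\sigma_0,\sigma_1$, and assemble $\sigma^*$. The only cosmetic difference is that the paper takes $\R_0,\R_1$ to be deterministic \emph{Streett} automata, so that their product $\S=\S_0\times\S_1$ with the union of the pair sets is again a DSA recognising the intersection; you instead keep DRAs and work with the conjunctive (non-Rabin, but prefix-independent) condition on the product, which is equivalent and which you correctly observe can be handled directly or via an IAR step. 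Your explicit verification that the round-robin flag flips infinitely often is a point the paper leaves implicit (``by the definition of intersection operation''), so if anything your write-up is slightly more detailed there.
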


In~\cite{DBLP:conf/tacas/HahnPSS0W20}, the constructed GFM automata are \emph{slim} in the sense that each state has at most two successors over a letter. 
The GFM automata we produce can easily be made slim.
This is because the only nondeterministic decision we have to make is to guess in the intersection operation when the individual DCAs will henceforth see only accepting states.
This decision can obviously be arbitrarily delayed.
We can therefore make them round-robin, considering only one LDBA that stems from a DCA at a time. For instance, when we have positive Boolean operations that contain $42$ automata from $\exists\forall\phi$ formulas, then we can avoid the outdegree from jumping to $2^{42}$ at the expense of increasing the state space by a factor of $42$ (while retaining the out-degree of $2$) to handle the round-robin in the standard way.

By applying the constructions of \buchiaut automata for leaf formulas and constructions for intersection and union operations on the syntax tree of a \ltlfplus or \ppltlplus formula, we can obtain a final \buchiaut automaton for the formula.
Therefore, the following result holds.
\begin{theorem}\label{thm:ltl-gfm-buchi}
    For an \ltlfplus/\ppltlplus formula $\Psi$, our method constructs a GFM \buchiaut automaton $\A$ such that $\lang{\A} = [\Psi]$.
\end{theorem}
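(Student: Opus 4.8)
The plan is a structural induction on the syntax tree of $\Psi$, preceded by a normalisation step that drives negations down to the leaves. First I would push every $\neg$ inward, using double-negation elimination, De Morgan's laws for $\land$ and $\lor$, and the four dualities on leaf patterns: $\neg\forall\phi\equiv\exists\neg\phi$, $\neg\exists\phi\equiv\forall\neg\phi$, $\neg\forall\exists\phi\equiv\exists\forall\neg\phi$, and $\neg\exists\forall\phi\equiv\forall\exists\neg\phi$. Each of these is immediate from the prefix-quantification semantics recalled above --- for instance, ``it is not the case that infinitely many prefixes satisfy $\phi$'' is precisely ``all but finitely many prefixes satisfy $\neg\phi$'' --- and since $\neg\phi$ is again an \ltlf/\ppltl formula with $[\neg\phi]=\poswords\setminus[\phi]$, the finite-trace component can be complemented faithfully (respecting the convention that $\dfaBuild(\cdot)$ never accepts $\emptyword$, cf.\ the footnote on $\finlang{\dfaBuild(\neg\phi)}=\poswords\cap[\neg\phi]$). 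The outcome is an equivalent formula $\Psi'$ with $[\Psi']=[\Psi]$, built only from $\land$ and $\lor$ over leaf formulas of the forms $\exists\psi,\forall\psi,\forall\exists\psi,\exists\forall\psi$.

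For the base case, to each leaf formula I apply the corresponding construction of Section~\ref{ssec:constructions}, feeding it $\dfaBuild(\psi)$ (or $\dfaBuild(\neg\psi)$ as the construction requires). Language correctness of the resulting \buchiaut automaton is exactly Theorems~\ref{lem:exists-lang}, \ref{lem:uni-lang}, \ref{lem:uni-exist-phi}, and~\ref{lem:ex-uni-lang}, and Theorem~\ref{lem:ex-uni-gfm} gives that each of these automata is GFM. So every leaf of $\Psi'$ already carries a GFM \buchiaut automaton whose $\omega$-language equals the semantics of that leaf.

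For the inductive step, suppose immediate subformulas $\Psi_1,\Psi_2$ carry GFM \buchiaut automata $\A_1,\A_2$ with $\lang{\A_i}=[\Psi_i]$. For $\Psi_1\lor\Psi_2$ I would take the Cartesian-product union automaton from Section~\ref{sec:bool-buchi}: its language is $\lang{\A_1}\cup\lang{\A_2}=[\Psi_1]\cup[\Psi_2]=[\Psi_1\lor\Psi_2]$, and it is GFM by Theorem~\ref{lem:union-gfm}. For $\Psi_1\land\Psi_2$ I would take the intersection automaton from Section~\ref{sec:bool-buchi}: its language is $[\Psi_1]\cap[\Psi_2]=[\Psi_1\land\Psi_2]$, and it is GFM by Theorem~\ref{lem:intersection-gfm}. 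Processing the tree of $\Psi'$ bottom-up therefore yields a GFM \buchiaut automaton $\A$ with $\lang{\A}=[\Psi']=[\Psi]$, which is the claim; if a bounded out-degree is desired, one additionally applies the round-robin slimming described after Theorem~\ref{lem:intersection-gfm}, which changes neither the language nor the GFM property.

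The only delicate point is the normalisation step: one must check that the leaf dualities are genuinely sound against the ``some / all / infinitely many / cofinitely many prefixes'' readings, and that complementing the \ltlf/\ppltl component via $\dfaBuild(\neg\psi)$ is exact at the level of non-empty finite traces. Everything after that is a mechanical assembly of results already established in Sections~\ref{ssec:constructions} and~\ref{sec:bool-buchi}, so I expect no further obstacle.
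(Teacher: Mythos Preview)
Your proposal is correct and follows the same approach as the paper: structural induction on the syntax tree, invoking Theorems~\ref{lem:exists-lang}--\ref{lem:ex-uni-gfm} at the leaves and Theorems~\ref{lem:union-gfm} and~\ref{lem:intersection-gfm} at internal $\lor/\land$ nodes. The paper's own justification is a single sentence (``By applying the constructions \ldots\ on the syntax tree \ldots\ we can obtain a final B\"uchi automaton''), so you are in fact more explicit than the paper on one point it leaves tacit: the grammar of \ltlfplus/\ppltlplus includes $\neg\Psi$, yet Section~\ref{sec:bool-buchi} provides no complementation construction, so the normalisation step pushing negations down to the leaf patterns via the four prefix-quantifier dualities is genuinely needed and your verification of their soundness fills a small gap the paper does not spell out.
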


Let $|\Psi|$ be the length of the formula, i.e., the number of modalities and operations in the formula.
Then we have that:

\begin{theorem}\label{lem:formla-length}
    For an \ltlfplus (respectively, \ppltlplus) formula $\Psi$, the number of states in $\A$ is $2^{2^{\O(|\Psi|)}}$ (respectively $2^{\O(|\Psi|)}$).
\end{theorem}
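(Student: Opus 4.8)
The plan is to bound $|\A|$ by structural induction on the syntax tree of $\Psi$, after first eliminating negations so that above the finite-trace subformulas only $\land$ and $\lor$ remain. Since GFMness is not preserved under complementation, I would not try to handle $\neg$ at the automaton level; instead I first push negations inward, using De~Morgan for $\land/\lor$, $\neg\neg\Psi\equiv\Psi$, and the quantifier dualities $\neg\forall\phi\equiv\exists\neg\phi$, $\neg\exists\phi\equiv\forall\neg\phi$, $\neg\forall\exists\phi\equiv\exists\forall\neg\phi$, and $\neg\exists\forall\phi\equiv\forall\exists\neg\phi$. Because $\neg$ is a native operator of both \ltlf and \ppltl, a negation that reaches a finite-trace subformula $\phi$ costs only one extra symbol, and along every root-to-leaf path the negations merely flip the quantifier block; hence the normal form $\Psi'\equiv\Psi$ has length $\O(|\Psi|)$, every leaf has the shape $Q\phi_i$ with $Q\in\{\forall,\exists,\forall\exists,\exists\forall\}$ and $\phi_i$ a (possibly negated) \ltlf/\ppltl formula, and all internal nodes are $\land$ or $\lor$. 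If there are $m$ leaves, then $m\le|\Psi'|=\O(|\Psi|)$, $\sum_{i=1}^m|\phi_i|\le|\Psi'|=\O(|\Psi|)$, and the tree has $m-1$ internal nodes.

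Each leaf automaton $\A_i$ is built from $\dfaBuild(\phi_i)$ or $\dfaBuild(\neg\phi_i)$ by the constructions of Section~\ref{ssec:constructions}: for $\exists\phi$, $\forall\phi$, and $\forall\exists\phi$ its state set is (at most) that of the underlying DFA, and for $\exists\forall\phi$ it is the DCA's state set times $\{0,1\}$ plus one sink from completion, so in all cases $|\A_i|\le 2\cdot|\dfaBuild(\cdot)|+1$. Using $|\neg\phi_i|=\O(|\phi_i|)$ together with the standard DFA-size bounds (singly exponential $2^{\O(|\phi_i|)}$ for \ppltl, doubly exponential $2^{2^{\O(|\phi_i|)}}$ for \ltlf), there is a constant $c$ with $|\A_i|\le 2^{c|\phi_i|}$ in the \ppltlplus case and $|\A_i|\le 2^{2^{c|\phi_i|}}$ in the \ltlfplus case. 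Since the union construction yields $|\states_0|\cdot|\states_1|$ states and the intersection construction $2|\states_0|\cdot|\states_1|$ states, a routine induction on the syntax tree of $\Psi'$ gives $|\A|\le 2^{m-1}\prod_{i=1}^m|\A_i|$, the $2^{m-1}$ accounting for one factor $2$ per intersection node.

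It remains to plug in and simplify. In the \ppltlplus case, $|\A|\le 2^{m-1}\prod_i 2^{c|\phi_i|}=2^{(m-1)+c\sum_i|\phi_i|}=2^{\O(|\Psi|)}$. In the \ltlfplus case, $|\A|\le 2^{m-1}\prod_i 2^{2^{c|\phi_i|}}=2^{(m-1)+\sum_i 2^{c|\phi_i|}}$; here $\sum_i 2^{c|\phi_i|}\le m\cdot 2^{c\max_i|\phi_i|}=2^{\O(|\Psi|)}$ (each $|\phi_i|\le\sum_j|\phi_j|=\O(|\Psi|)$ and $m=\O(|\Psi|)$), and since also $m-1\le|\Psi|\le 2^{\O(|\Psi|)}$, the exponent is $2^{\O(|\Psi|)}$, whence $|\A|=2^{2^{\O(|\Psi|)}}$. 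The optional round-robin slimming multiplies the state count by a further $m\le|\Psi|$, which is absorbed in both bounds; and since a single $\forall\phi$ with an \ltlf $\phi$ already forces a doubly exponential DFA, the \ltlfplus bound is tight in order of magnitude.

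The only genuinely delicate point is the normalisation step: I must argue that negation is truly free here, i.e.\ that it can be pushed all the way down to the finite-trace subformulas by the logical dualities alone, with no automata-theoretic blow-up — this is precisely what keeps the construction compositional and the formula linear before the (doubly) exponential DFA step. Everything else is bookkeeping with towers of exponentials, the guiding observation being that the polynomial overheads (the $2^{m-1}$ from intersections and the $m$ from slimming) and even the single-exponential sum $\sum_i 2^{c|\phi_i|}$ are swallowed by the outermost exponential in the \ltlfplus bound, while in the \ppltlplus bound the exponents merely add up to $\O(|\Psi|)$.
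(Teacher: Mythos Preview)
Your proof is correct and follows essentially the same route as the paper: bound the leaf automata by the DFA sizes, then multiply across the Boolean structure, picking up a factor~$2$ per intersection. The paper writes $2^{n_1}\prod_i 2^{2^{\O(d_i)}}$ and bounds $\sum_i 2^{\O(d_i)}\le 2^{\sum_i \O(d_i)}$, whereas you bound it by $m\cdot 2^{c\max_i|\phi_i|}$; both work. One point where you are actually more careful than the paper: you explicitly push outer negations down to the finite-trace layer via the quantifier dualities before constructing automata, whereas the paper's proof silently assumes the formula is already a positive Boolean combination of leaf formulas and never mentions~$\neg$.
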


The number of states obtained in Theorem \ref{lem:formla-length} is  optimal, in the sense that the worst-case double exponential blow-up (single exponential blow-up) is already unavoidable for \ltlf (resp.\ \ppltl) \cite{DBLP:conf/atva/BansalLTVW23,DBLP:conf/ijcai/GiacomoSFR20}.

\subsection{Returning the Strategy}
\label{sec:synthesis}
Given an \ltlfplus/\ppltlplus\ formula $\Psi$ and an MDP $\M$,
our construction can obtain an optimal strategy for $\M$ to achieve maximal satisfaction probability of $\Psi$\, as follows.

First, construct a GFM \buchiaut automaton $\A$ for $\Psi$ using approaches described in Section~\ref{sec:formula-buchi} with $\lang{\A} = \lang{\Psi}$.

Second, construct the product $\M^{\times} = \M \times \A$ and compute the list of AMECs $E = \setnocond{E_1, \cdots, E_k}$ in $\M^{\times}$ using the standard approach described in~\cite{DBLP:books/daglib/0020348}.

Finally, synthesise a strategy as follows: In the AMECs, we can select an action for every state that gives the shortest path to the set of accepting states.
    This shortest path can be multiple but we only need one and the definition of shortest path is clearly well defined.
    For states outside AMECs, we select an action for every state to reach an AMEC with maximal probability;
    this is equivalent to computing the strategy for obtaining the maximal reachability probability to AMECs.
    The resultant strategy, denoted by $\sigma^{\times}$, is positional on $\M^{\times}$ for \buchiaut acceptance condition according to \cite{DBLP:conf/dagstuhl/Mazala01}.

Since the \buchiaut automaton $\A$ for $\Psi$ is GFM we get:
\begin{theorem}\label{thm:main-result}
    The synthesised strategy $\sigma^{\times}$ for $\M$ is  optimal to achieve maximal satisfaction probability of $\Psi$.
\end{theorem}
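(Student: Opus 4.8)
The plan is to exploit the GFMness of $\A$ to reduce the question to a reachability computation on the product, and then to check that the two ``modes'' of $\sigma^{\times}$ --- maximal reachability to the AMECs outside them, shortest path to accepting states inside them --- together realise that reachability value as an acceptance probability. By Theorem~\ref{thm:ltl-gfm-buchi}, $\A$ is GFM and $\lang{\A}=[\Psi]$, so, as established in the product-MDP discussion above, $\psem(\M,\A)=\psyn(\M,\A)=\sup_{\sigma}\pp_{(\M\times\A)^{\sigma}}(\diamond X)$, the maximal probability of reaching the set $X$ of AMEC states of $\M^{\times}$, a supremum attained by a memoryless strategy. Thus it suffices to prove that under $\sigma^{\times}$ we have (i) $\pp_{(\M^{\times})^{\sigma^{\times}}}(\diamond X)=\psyn(\M,\A)$, and (ii) conditioned on entering an AMEC $E_i$, the run satisfies $\acccond^{\times}$ with probability $1$.

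Claim (i) is essentially by construction: outside $X$, $\sigma^{\times}$ plays an optimal strategy for the standard maximal-reachability objective $\diamond X$; and since every AMEC is an end-component with no outgoing transitions, once $X$ is reached the run stays in the AMEC it entered. Hence $\pp_{(\M^{\times})^{\sigma^{\times}}}(\diamond X)$ equals the maximal reachability probability, i.e.\ $\psyn(\M,\A)$.

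For claim (ii), fix an AMEC $E_i$. Being an accepting end-component, $E_i$ contains a state of $\acc^{\times}$, and being strongly connected and closed, every state of $E_i$ has a finite distance $d(s)$ to the nearest accepting state inside $E_i$; at every non-accepting $s$ there is an action whose successor $s'$ (necessarily in $E_i$) has $d(s')=d(s)-1$, and $\sigma^{\times}$ takes such an action. Fixing $\sigma^{\times}$ on $E_i$ yields a finite Markov chain, and I would argue that every bottom strongly connected component $C$ of it contains an accepting state: taking $s\in C$ minimising $d$ over $C$, if $d(s)>0$ then the action chosen at $s$ has a successor $s'\in C$ with $d(s')<d(s)$, contradicting minimality, so $d(s)=0$ and $s\in\acc^{\times}$. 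Since almost every run enters some BSCC and visits each of its states infinitely often, $\acccond^{\times}$ holds with probability $1$. (By \cite{DBLP:conf/dagstuhl/Mazala01}, as noted above, $\sigma^{\times}$ may also be taken positional on $\M^{\times}$.)

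Combining (i) and (ii), $\pp_{(\M^{\times})^{\sigma^{\times}}}(\xi\text{ is accepting})=\pp_{(\M^{\times})^{\sigma^{\times}}}(\diamond X)=\psem(\M,\A)$. Reading $\sigma^{\times}$ as a strategy for $\M$ that keeps the $\A$-component as internal memory, the induced distribution on $\M$-paths is the $\M$-marginal of the distribution on $\M^{\times}$-paths under $\sigma^{\times}$, and every accepting $\M^{\times}$-path has an $\A$-component that is an accepting run of $\A$ on the induced trace, so that trace lies in $\lang{\A}=[\Psi]$. Hence $\pp_{\M^{\sigma^{\times}}}(\lang{\A})\ge\psem(\M,\A)$, while $\le$ is immediate since $\psem$ is a supremum over all strategies; so $\sigma^{\times}$ attains the maximal satisfaction probability of $\Psi$ and is optimal. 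The step I expect to be the main obstacle is the BSCC argument in (ii) --- establishing that the shortest-path choice inside an AMEC forces every bottom SCC of the induced chain to contain an accepting state; the remainder merely assembles GFMness, maximal-reachability optimality, and end-component absorption, all of which are available above.
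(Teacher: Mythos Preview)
Your proposal is correct and follows the same line the paper intends: the paper offers no proof beyond the sentence ``Since the B\"uchi automaton $\A$ for $\Psi$ is GFM we get:'' together with pointers to \cite{DBLP:books/daglib/0020348} and \cite{DBLP:conf/dagstuhl/Mazala01}, and your argument is exactly the standard unpacking of those references (GFM gives $\psem=\psyn=$ max-reachability to AMECs; the described positional strategy realises that value and, via your BSCC/distance argument, forces acceptance almost surely inside each AMEC).

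One small wording fix in claim (i): you justify that the run stays in an AMEC once entered by saying an AMEC ``has no outgoing transitions''. In the paper's terminology this holds only for the AMEC's \emph{restricted} action set $\act'$; a state of an AMEC may well have actions in $\act\setminus\act'$ that leave it (otherwise every MEC would already be a leaf component). The correct reason the run is absorbed is that the shortest-path-to-$\acc^{\times}$ action chosen by $\sigma^{\times}$ at each AMEC state lies in $\act'$, and the sub-MDP is closed under $\act'$-transitions by definition. This does not affect your argument, since the shortest path is computed inside the AMEC and hence uses only $\act'$-actions; it just sharpens the justification.
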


\section{Conclusion}

In this paper, we investigated the problem of solving MDPs with \ltlfplus and \ppltlplus temporal objectives, showing the effectiveness of these logics for probabilistic planning in MDPs. Our key contribution lies in presenting a provably correct technique to construct GFM Büchi automata for \ltlfplus and \ppltlplus formulas, leveraging the compositional advantages of DFA-based methods. 
Note that our construction is designed to be implementation-friendly and well-suited for a straightforward symbolic implementation. 
In fact, for \ppltl we can directly construct the symbolic DFA in polynomial time  \cite{DBLP:conf/ijcai/GiacomoSFR20}. The final Boolean combination of these symbolic \buchiaut automata, as described in Section~\ref{sec:bool-buchi}, is at most polynomial in their combined sizes and can be realised through a Cartesian product. 
As a future work, we aim to implement this approach, employing symbolic techniques, within state-of-the-art tools such as PRISM \cite{PRISM40}. This development will facilitate the practical application of our methods across a range of domains, including AI, robotics, and probabilistic verification.

\section*{Acknowledgements}
This work has been partially supported by ISCAS Basic Research (Grant Nos.\ ISCAS-JCZD-202406, ISCAS-JCZD-202302), CAS Project for Young Scientists in Basic Research (Grant No.\ YSBR-040), ISCAS New Cultivation Project ISCAS-PYFX-202201,  the ERC Advanced Grant WhiteMech (No.\ 834228), the UKRI Erlangen AI Hub on Mathematical and Computational Foundations of AI, and the EPSRC through grants EP/X03688X/1 and EP/X042596/1.

\bibliographystyle{named}
\bibliography{ref}

\newpage
\cleardoublepage
\appendix

\section*{Appendices}

\subsection*{\ltlf and \ppltl}
Linear Temporal Logic over finite traces (\ltlf)~\cite{baier2006planning,de2013linear} is a variant of \ltl~\cite{pnueli1977temporal} with the same syntax but it is interpreted over finite instead of infinite traces.
The syntax of an \ltlf formula over a finite set of propositions $\ap$ is defined as $\phi ::= a \in \ap \mid \ltlfNeg \phi \mid \phi \land \phi \mid \phi \lor \phi \mid \ltlfX \phi \mid \phi \ltlfU \phi \mid \ltlfF \phi \mid \ltlfG \phi$.
Here $\ltlfX$ (strong Next), $\ltlfU$ (Until), $\ltlfF$ (or $\diamond$) (Finally/Eventually), and $\ltlfG$ (or $\square$) (Globally/Always) are temporal operators interpreted over finite traces.
Note that $\ltlfX$ is a strong next operator such that $\ltlfX \phi$ requires the tail of the finite trace to satisfy $\phi$, while we use $\ltlfN$ to denote the weak next operator such that $\ltlfN \phi$ demands that if the tail of the finite trace is not empty, then it satisfies $\phi$.
Consequently, $\ltlfN \phi := \neg \ltlfX \neg \phi$.
As usual, $\ltlftrue$ and $\ltlffalse$ represent a tautology and a falsum, respectively.
We denote by $\size{\phi}$ the length of $\phi$, i.e., the number of temporal operators and connectives in $\phi$.
We refer interested readers to~\cite{pnueli1977temporal} and~\cite{de2013linear} for the semantics of \ltl and \ltlf, respectively.
The language of an \ltlf formula $\phi$, denoted as $\flang{\phi}$, is the set of \emph{finite traces} over $2^{\ap}$ that satisfy $\phi$.

The syntax of Pure Past LTL over finite traces (\ppltl) is given as $\phi ::= a \in \ap \mid \neg \phi \mid \phi \land \phi \mid \phi \lor \phi \mid \ppltlY \phi \mid \phi \ppltlS \phi$.
Here $\ppltlY$ (``Yesterday") and $\ppltlS$ (``Since") are the past operators, analogues of ``Next" and ``Until", respectively, but in the past.
Note that PPLTL is interpreted over finite traces.
The first position is denoted as $\ppltlFirst$ and we have $\ppltlFirst \equiv \neg (\ppltlY \ltlftrue)$.

Although \ltlf and \ppltl have the same expressive power, it incurs doubly exponential blow-up for translating \ltlf to DFAs~\cite{de2013linear}, while singly exponential blow-up for translating \ppltl to DFAs~\cite{DBLP:conf/ijcai/GiacomoSFR20}.

\subsection*{Detailed AEC game description}

\paragraph{AEC-simulation game.} While determining the GFMness of an NBA is PSPACE-hard~\cite{DBLP:conf/concur/Schewe0Z23}, there is a simple and sufficient way to establish that the \buchiaut automata constructed in this work are GFM by the two-player game \emph{accepting end-component simulation} (AEC simulation)  between \emph{Spoiler} and \emph{Duplicator} ~\cite{DBLP:conf/tacas/HahnPSS0W20}. Specifically, given a GFM automaton $\A$ and an automaton $\B$ such that $\lang{\B} = \lang{\A}$, if Duplicator wins the AEC-simulation game over $\B$ and $\A$, then $\B$ is also GFM.

The AEC-simulation game over $\A$ and $\B$ begins with the Spoiler, who places her pebble on the initial state of $\A$. Then, the Duplicator puts his pebble on the initial state of $\B$.
The two players then take turns and at each round, the Spoiler chooses an input letter and an according transition from $\A$, and then the Duplicator chooses a transition for the same letter in $\B$.
Different from the classic simulation game, in the AEC-simulation game, the
Spoiler has an additional move that she can (and, in order to win, has to) perform once in the game: in addition to choosing a letter and a transition, she can claim that she has reached an AEC, and provide a complete list of sequences of automata transitions that can henceforth occur infinitely often in $\A$.
This list will never be updated afterwards. Both players produce an infinite run of their respective automata. The Duplicator has four ways to win:
\begin{itemize}
    \item if the Spoiler never makes an AEC claim,
    \item if the run of $\B$ he constructs is accepting,
    \item if the run the Spoiler constructs in $\A$ does not comply with the AEC claim, and
    \item if the run that the Spoiler produces is not accepting.
\end{itemize}
We say that $\B$ AEC-simulates $\A$ if the Duplicator wins.

\subsection*{Proof for Theorem~\ref{lem:bisim}}

\begin{proof}
Let $\states_{\A}$ and $\states_{\B}$ be the set of states of $\A$ and $\B$, respectively and let $\lab^{\times}_{\M\times\B}$ be the labelling function of $\M\times\B$.
Since both $\A$ and $\B$ are GFM, we have that $\psyn(\M, \A) = \psem(\M, \A) = \psem(\M, \B) =\psyn(\M, \B)$.
In the AEC-simulation game, Spoiler will play on $\M\times\A$ and the Duplicator will use the \emph{optimal} strategy $\sigma$ to obtain the correct satisfaction probability of $\lang{\B}$ to play on $\M\times\B$.
The only situation where the Spoiler could win the game is that she makes the AEC claim once, the run $\rho \in (S\times\states_{\A})^{\omega}$ constructed by her enters an accepting leaf component and the run is accepting.
This means that all states $(m, q) \in S\times\states_{\A^q}$ in the accepting leaf component have the probability $1$ to create a run with a trace in $\lang{\A^q}$, according to Theorem~\ref{thm:end-component}.
Let $u$ be the finite word that Spoiler selects that lead to the entry state $(m, q)$ in the accepting leaf component.
Assume by contraposition that $\sigma$ is not able to lead $\M\times\B$ over $u$ to a state $(m', q')$ such that from $(m', q')$, the run constructed by $\sigma$ is accepting.
In other words, $(m', q')$ has probability less than $1$ to generate an accepting run in $\M\times\B$.
Then, $\psyn(\M,\B) = \pp_{(\M\times\B)^{\sigma}}\setnocond{\xi \in \Omega^{\M\times\B}_{\sigma}: \xi\text{ is accepting}} < \psem(\M, \B) = \psem(\M,\A)$ since $\pp_{\M\times\B}\setnocond{u w: w \in \infwords} < \pp_{\M\times\A}\setnocond{uw: w\in\infwords} = \pp_{\M}\setnocond{uw: w\in\infwords} $.
This contradicts with the fact that $\B$ is GFM.
Thus, $\sigma$ is able to choose an accepting run of $\M\times \B$ whenever the run of $\M\times\A$ is accepting.
Then the theorem follows.
\end{proof}

\subsection*{Proof for Theorem~\ref{lem:exists-lang}} 
\begin{proof}
\begin{itemize}
    \item[] $w = a_0 a_1\cdots \in \lang{\B}$
    \item[$\Leftrightarrow$] the run $\rho = q_0 q_1  \cdots q^{\omega}_k$ of $\B$ over $w$ such that $q_0  = \iota$ and $q_k \in \acc'$ is the sink accepting state for some $k > 0$.
    \item[$\Leftrightarrow$] $q_0 \cdots q_k$ is an accepting run of $\C$ ($\D$) over $a_0 \cdots a_{k-1}$.
    (The run must be an accepting run of $\D$ as the initial state is not a final state.)
    \item[$\Leftrightarrow$] $a_0\cdots a_{k-1} \in \finlang{\D}$, i.e., $a_0 \cdots a_{k-1} \models_{*} \phi$ 
    \item[$\Leftrightarrow$] $w \models \exists \phi$, i.e., $w \in \flang{\exists\phi}$.
\end{itemize}
\end{proof}

\subsection*{Proof for Theorem~\ref{lem:uni-lang}}

\begin{proof}
 Observe that $\B$ is still deterministic.
\begin{itemize}
    \item[] $w = a_0 a_1\cdots \in \lang{\B}$
    \item[$\Leftrightarrow$] the run $\rho = q_0 q_1  \cdots$ of $\B$ over $w$ such that $q_0  = \iota$ and for all $i \geq  0$, $q_i \in \states'\setminus \acc'$, otherwise it gets trapped in sink states in $\acc'$. 
    \item[$\Leftrightarrow$] for all $k > 0$, every prefix $q_0 \cdots q_k$ is not an accepting run of $\C'$ (equivalently, $\C$).
    \item[$\Leftrightarrow$] for all $k > 0$, $a_0\cdots a_{k-1} \notin \finlang{\C}$, i.e., $a_0 \cdots a_{k-1} \models_{*} \phi$ for all $k > 0$.
    \item[$\Leftrightarrow$] $w \models \forall \phi$, i.e., $w \in \flang{\forall\phi}$.
\end{itemize}    
\end{proof}

\subsection*{Optimisations for the constructions of Section \ref{ssec:constructions}}
For the optimisation of the automata that result from the constructions of Theorems \ref{lem:exists-lang} and \ref{lem:uni-lang}, we note that the leaf formulas in form of $\forall \phi$ and $\exists \phi$ are safety (describing that nothing bad ever happens) and guarantee properties (expressing that good things eventually happen), respectively~\cite{DBLP:conf/podc/MannaP89}, and the resulting DBAs are safety and reachability automata.
They are thus in particular weak.

To explain what weak \buchiaut automata are, we need to first introduce the notion called \emph{strongly connected component} (SCC).
An SCC of a TS $\T=(\states, q_0,\trans)$ is a set of states $C\subseteq \states$ such that for each pair of states $q, q' \in C$, $q$ and $q'$ can reach each other via transitions defined by $\trans$.
A \buchiaut automaton is called \emph{weak} if every SCC contains either all accepting (or: final if red as a DFA) states or all rejecting (or: non-final if red as a DFA) states.
This holds for the automata from our constructions.

Such automata can be \emph{minimised} using the algorithm proposed in~\cite{DBLP:journals/ipl/Loding01}, which itself takes advantage of \emph{DFA minimisation}.

For the construction from Theorem \ref{lem:exists-lang}, this consists of a few simple steps: one would remove all states that are not reachable (states that can only be reached through accepting states in the DFA we start with become non-reachable by the construction),
merge all accepting sinks,
and then successively merge those states to the accepting sink where the accepting sink is reached for every input letter.
Once a fixed point is reached, one can simply minimise the resulting automaton as a DFA.

The construction from Theorem \ref{lem:uni-lang} builds on this construction and can be treated similarly.

For the construction of Theorem \ref{lem:uni-exist-phi}, we obtain a proper (i.e.\ not necessarily weak) DBA, and minimising DBA is hard as shown in \cite{DBLP:conf/fsttcs/Schewe10}.
However, \cite{DBLP:conf/fsttcs/Schewe10} also provides cheap heuristics for a statespace reduction, and one can use this.

The construction of Theorem \ref{lem:ex-uni-lang} results in an LDBA with two deterministic parts.
The second part, where all states are final, can be minimised efficiently.
In order to do so, we can successively remove all states that have no predecessors in $F$ \emph{or} no successor until a fixed point is reached.
We note that a successful AEC claim against the DCA for the relevant part of the proof in Theorem \ref{lem:ex-uni-gfm} cannot contain the counterpart of any of the states pruned this way, as an infinity containing these states would have to contain rejecting states, too, so that the according run in the DCA would be rejecting.

After reaching the fixed point, we can then minimise the resulting automaton using standard DFA minimisation and, where states are merged by this minimisation, re-route the transition from the first part accordingly.

For the first part, we can run an adjusted DFA minimisation where the starting step is to distinguish any two states that have different accepting successors: for states $p,q \in Q \setminus F$ we have $p \not\equiv q$ if there is a letter $a \in \Sigma$ and a state $f \in F$ s.t.\ $\delta(p,a) \ni f \notin \delta(q,a)$.

After this alteration, the quotienting of the standard DFA minimisation is continued as usual.

Note that, while this is a very cheap and simple procedure, it is not a minimisation procedure.

\subsection*{Proof for Theorem~\ref{lem:uni-exist-phi}}

\begin{proof}
    We first observe that $\B$ is deterministic, so that we will refer to \emph{the} run of $\B$ on an input word.
\begin{itemize}
    \item[] $w = a_0 a_1\cdots \in \lang{\B}$
    \item[$\Leftrightarrow$] the run $\rho = q_0 q_1  \cdots$ of $\B$ over $w$ such that $q_0  = \iota$ of $\B$ on $w$ is accepting
    \item[$\Leftrightarrow$] infinitely many prefixes of the run $\rho = q_0 q_1  \cdots$ of $\B$ over $w$ end in an accepting state of $\B$ 
    \item[$\Leftrightarrow$] infinitely many prefixes of $w$ are accepted by $\D$
   \item[$\Leftrightarrow$] infinitely many prefixes of $w$ satisfy $\phi$
   \item[$\Leftrightarrow$] $w \models \forall\exists\phi$, i.e.\ $w \in \flang{\forall\exists\phi}$.
\end{itemize}

This concludes the proof.
\end{proof}

\subsection*{Proof for Theorem~\ref{lem:ex-uni-lang}}

\begin{proof}
    Let $w = a_0 a_1 \cdots$ be an infinite word.
    \begin{itemize}
    \item[] $w = a_0 a_1\cdots \in \lang{\B} = \lang{\B'}$
    \item[$\Leftrightarrow$] the run $\rho = \langle q_0, \ell_0\rangle, \langle q_1, \ell_1\rangle  \cdots \langle q_k, \ell_k\rangle \cdots $ of $\B'$ over $w$ such that $q_0  = \iota$ and for some $k > 0$, we have $\ell_i = 1$ for all $i \geq k$. That is, $\langle q_k, 1\rangle = \trans_j(\langle q_{k-1}, 0\rangle, a_{k-1}) $.
    \item[$\Leftrightarrow$] for some $k > 0$, we have $q_i \in \acc$ for all $i \geq k$.
    \item[$\Leftrightarrow$] for some $k > 0$, $a_0 \cdots a_i \in \finlang{\D}$ or equivalently, $a_0 \cdots a_i \models_* \phi$ for all $i \geq k$.
   \item[$\Leftrightarrow$] $w \models \exists\forall\phi$, i.e.\ $w \in \flang{\exists\forall\phi}$.
\end{itemize}
        This concludes the proof.
\end{proof}

\subsection*{Proof for Theorem~\ref{lem:ex-uni-gfm}}

\begin{proof}
    We will prove the theorem using the AEC-simulation game.
    Recall that a DCA has the same structure as DBAs except that the set $\acc$ is called rejecting sets.
    So, an accepting run of a DCA only visits states outside of $\acc$ from some point on, rather than visits them infinitely often. 
    Therefore, if we read $\C$ as a DCA $\A = (\states, \iota, \trans, \states\setminus \acc)$ as in Step 2, then $\lang{\A} = \flang{\exists\forall \phi} = \lang{\B}$.
    Obviously, $\A$ is GFM since it is deterministic.
    If we can prove that $\B$ AEC-simulates $\A$, then $\B$ is also GFM according to Theorem~\ref{lem:aec-sim}.

    Now we can provide the winning strategy for the Duplicator on $\B$ in the AEC-simulation game.
    Before the Spoiler makes the AEC claim, the Duplicator will take transitions within $\states \times \setnocond{0}$ via the deterministic transition function $\trans_0$.
    $\trans_0$ just mimics the behaviour of the transition function $\trans$ of $\C$.
    Once the AEC claim has been made by Spoiler, the Duplicator will choose to transition to $F \times \setnocond{1}$ via $\trans_j$ at the earliest point.
    Note that $\trans_j$ is also deterministic.
    The only nondeterminism in $\B$ lies in choosing between $\trans_0$ and $\trans_j$ for computing the successors.
    Afterwards, the Duplicator takes transitions within $\acc \times \setnocond{1}$ via the deterministic transition function $\trans_1$.
    The only situation where Spoiler might have a chance to win is that she makes an AEC claim and the run $\rho$ of $\A$ over the chosen word $w$ by her is accepting.
    Since after the AEC claim, the list of finite traces visited infinitely often is fixed, the run $\rho$ must stay within the $\acc$ region; Otherwise $\rho$ will not be accepting because some rejecting states in $\states\setminus\acc$ will be visited also infinitely often.
    
    Assume that $\rho = q_0 \cdots q_{k-1} q_k \cdots \in \states^* \cdot F^{\omega}$ where for some $k > 0$, $q_i \in \acc$ for all $i \geq k$ and Spoiler makes an AEC claim when taking a transition $(q_{\ell}, a_{\ell}, q_{\ell+1})$ for some $\ell > k - 1$. 
    According to our strategy, the run $\widehat{\rho}$ of $\B$ over $w$ created by Duplicator would be $\langle q_0, 0\rangle \cdots \langle q_k, 0\rangle\cdots \langle q_{\ell}, 0\rangle \langle q_{\ell+1}, 1\rangle \cdots$ whose projection on the first component would be exactly the run $\rho$.
    Thus, $\widehat{\rho}$ is accepting in $\B$.
    It follows that $\B$ AEC-simulates $\A$.
    Hence, $\B$ is also GFM according to Theorem~\ref{lem:aec-sim}.
    
\end{proof}

\subsection*{Detailed proof for Theorem~\ref{lem:union-gfm}}
\label{app:lemma-proof--union-gfm}
\begin{proof}
Let $\M = (S, \act, \prob, s_0, \lab)$ be an MDP. Our proof idea is to prove that $\psyn(\M, \A) = \psem(\M, \A)$, which basically requires us to prove that $\psyn(\M, \A) \geq \psem(\M, \A)$.
We will prove it with the help of equivalent DRAs of $\A_0$ and $\A_1$.

Let $\R_0 = (\states'_0, \trans'_0, \initState'_0, \acccond'_0)$ and $\R_1 = (\states'_1, \trans'_1,\initState'_1, \acccond'_1)$ be two DRAs that are language-equivalent to $\A_0$ and $\A_1$, respectively.
Let $\R=\R_0 \times \R_1$ be the union DRA of $\R_0$ and $\R_1$ such that $\lang{\R} = \lang{\R_0} \cup \lang{\R_1}$.
Formally, $\R$ is a tuple $(\states' = \states'_0\times\states'_1, \trans', \initState'=\langle \initState'_0,\initState'_1\rangle,\acccond')$ where $\trans'(\langle q_0,q_1\rangle,a) = \langle \trans'_0(q_0, a), \trans'_1(q_1, a)\rangle$ for each $\langle q_0,q_1\rangle \in \states'$ and $a \in \alphabet$, and $\acccond' = \bigcup_{i=1}^{k_0}\setnocond{(B_i\times \states'_1, G_i\times\states'_1) : (B_i, G_i)\in \acccond'_0} \cup \bigcup_{i=1}^{k_1} \setnocond{(\states'_0\times B_i, \states'_0\times G_i) : (B_i, G_i)\in \acccond'_1}$.
Let $w \in \infwords$, and $\rho_0$ and $\rho_1$ are the runs over $w$ in $\R_0$ and $\R_1$, respectively.
Analogous to intersection product, the run of $\R = \R_0\times\R_1$ over $w$ then is basically the product $\rho_0 \times \rho_1$.
Moreover, if $w \in \lang{\A}$, then the run $\rho_0 \times \rho_1$ satisfies either $\acccond'_0$ or $\acccond'_1$, which indicates that $\rho_0 \times \rho_1$ is accepting in $\R $.
Then, it follows that $\psem(\M, \R) = \psem(\M, \A)$ as $\lang{\A} = \lang{\R}$.
As $\R$ is deterministic, we have $\psem(\M, \R) = \psyn(\M, \R)$.
Therefore, we only need to prove that $\psyn(\M, \A) \geq \psyn(\M, \R)$.

Let $\sigma$ be the \emph{optimal} strategy on $\M$ to obtain the maximal satisfaction probability for $\lang{\A}$, i.e., \[\pp_{\M^{\sigma}}(\lang{\A}) = \sup_{\sigma'}\pp\setnocond{ \xi \in \Omega_{\sigma'}^{\M}(s_0): \lab(\xi) \in \lang{\A}}.\]

Note again that here $\sigma$ is usually not a positional strategy for $\M$ and needs extra memory to store history traces.

We denote by $\T_0$, $\T_1$ and $\T_0\times\T_1$ the TSes of $\R_0$, $\R_1$, and $\R$ respectively. 
Similarly, we now work on the large Markov chain $\M' = \M^\sigma \times \T_0\times \T_1$ by ignoring the acceptance conditions where $\M^{\sigma}$ is already an MC.
Let $S_{\M'}$ be the state space of $\M'$.
Thus, $\M'$ has only probabilistic choices.
Since $\R = \R_0 \times \R_1$ is deterministic, it is easy to see that 
\[ \psem(\M, \A) = \psyn(\M, \R) = \pp_{\M'}(\diamond X)\]
where $X$ is the set of states in accepting MECs.

Let $c \in \{0,1\}$.
We know that $\A_c$ is GFM.
According to Theorem~\ref{lem:bisim}, there is an optimal strategy $\sigma_c$ for $\M'\times\A_c$ to AEC-simulate $\M' \times \R_c$.
Let $\A^q$ be the automaton constructed from $\A$ by setting the initial state to $q$.
Moreover, observe that, for all reachable states $(m, q) \in S_{\M'} \times \states_c$, the syntactic probability of a run starting in $(m,q)$ to create a run with a trace in $\lang{\A^q_c}$ is
\begin{enumerate}
    \item one if $m$ is in a leaf component that satisfies the Rabin condition $\acccond_c$, and
    \item zero if $m$ is in a leaf component that does not satisfy the Rabin condition $\acccond_c$.
\end{enumerate}
Further, for each run from $(m,q)$ and a leaf component $L$ of $\M'$, a state in $L \times Q_c$ is reached in $(\M'\times\A_c)^{\sigma_c}$ with the same probability as $L$ is reached from $m$, as this is simply a projection on the paths of $\M'$.
$\sigma_c$ only resolves the nondeterminism in $\A_c$ and does not impose extra probability.

Thus, we construct the optimal strategy for $\M \times \A$ by building the product of the strategy $\sigma$ for $\M$, $\sigma_0$ and $\sigma_1$, which resolves the nondeterminism of $\A_0$ and $\A_1$, respectively, independently in $\M' \times \A$ (viewing $\A$ as the cross product of $\A_0$ and $\A_1$).
That is, in this case, $\sigma_c$ works independently from $\sigma_{1-c}$ on $\M^{\sigma} \times \T_0 \times \T_1 \times \A_c$ and the state space of $\T_0 \times \T_1$ will be used as extra memory for $\sigma_c$ in addition to store states from $\M^{\sigma}$ and $\A_c$.

It follows that the syntactic probability of a run starting in $(m,\langle q_0,q_1\rangle) \in S_{\M'} \times \states_c$ to create a run with a trace in $\lang{\A^{\langle q_0,q_1\rangle}_c}$ is
\begin{enumerate}
    \item one if $m$ is in a leaf component that satisfies either $\acccond_0$ or $\acccond_1$.
    Since $\M'\times\A_0$ (respectively, $\M'\times\A_1$) AEC-simulates $\M'\times\R_0$ (respectively, $\M'\times\R_1$), this run also visits either $S_{\M'} \times\acc_0$ or $S_{\M'}\times\acc_1$ states infinitely often.
    Hence, the run visits $\M'\times \acc$ of $\M'\times\A$ infinitely often and the run is accepting in $\M' \times \A$.
    \item zero if $m$ is in a leaf component that does satisfy either $\acccond_0$ or $\acccond_1$.
    Similarly, this run must be rejecting in $\M' \times \A$.
\end{enumerate}
It again holds that, for each run from $(m,\langle q_0,q_1\rangle) \in S_{\M'}\times \states$ and a leaf component $L$ of $\M'$, a state in $L \times Q_0 \times Q_1$ is reached in $(\M'\times\A)^{\sigma_0,\sigma_1}$ with the same probability as $L$ is reached from $m$, as this is simply a function of $\M'$.

Therefore, there is a strategy for $\M \times \A$ such that $\psyn(\M, \A) \geq \psyn(\M, \R) = \psem(\M, \A) = \psem(\M, \A)$.
It follows that $\psyn(\M,\A) = \psem(\M,\A)$ for any given MDP $\M$.
This then concludes that $\A$ is also GFM.
\end{proof}

\subsection*{Proof for Theorem~\ref{lem:intersection-gfm}}
The proof of Theorem~\ref{lem:intersection-gfm} is entirely similar to the one of Theorem~\ref{lem:union-gfm}, in which we just replace the Rabin acceptance condition with the Streett acceptance condition for the product automaton.

Since here we use Streett condition, we first introduce everything about it that will be used here.
Similarly to Rabin condition, for Street condition, $\acccond = \bigcup^k_{i = 1} \setnocond{(B_i, G_i)}$ is such that $B_i \subseteq \states$ and $G_i\subseteq\states$ for all $1\leq i\leq k$. 
Recall that for Rabin condition, a run $\rho$ satisfies the acceptance condition if there is some $j \in [1,k]$ such that $\inf(\run) \cap G_j \neq \emptyset$ and $\inf(\run)\cap B_i = \emptyset$, while for Streett condition, a run $\rho$ satisfies the acceptance condition if for \emph{all} $j \in [1,k]$, it holds that $\inf(\run) \cap G_j \neq \emptyset$ or $\inf(\run)\cap B_i = \emptyset$.

In the product MDP $\M\ times \A$ where $\A$ is a Streett automaton, the definition of $\alpha^{\times}$ is the same as Rabin.

\begin{proof}
Let $\M = (S, \act, \prob, s_0, \lab)$ be an MDP. Our proof idea is to prove that $\psyn(\M, \A) = \psem(\M, \A)$, which basically requires us to prove that $\psyn(\M, \A) \geq \psem(\M, \A)$.
We will prove it with the help of equivalent DSAs of $\A_0$ and $\A_1$.

Let $\S_0 = (\states'_0, \trans'_0, \initState'_0, \acccond'_0)$ and $\S_1 = (\states'_1, \trans'_1, \initState'_1, \acccond'_1)$ be two DSAs that are language-equivalent to $\A_0$ and $\A_1$, respectively.
Moreover, let $\S=\S_0 \times \S_1$ be the intersection DSA of $\S_0$ and $\S_1$ such that $\lang{\S} = \lang{\S_0} \cap \lang{\S_1}$.
Formally, we define $\S$ as the tuple $(\states' = \states'_0 \times \states'_1, \trans', \initState' = \langle \initState'_0, \initState'_1\rangle, \acccond')$ where $\trans'(\langle q_0, q_1\rangle, a) = \langle \trans'_0(q_0, a), \trans'_1(q_1, a)\rangle$ for each $\langle q_0, q_1\rangle \in \states'$ and $a \in \alphabet$, and $\acccond' = \bigcup_{i=1}^{k_0}\setnocond{(B_i\times \states'_1, G_i\times\states'_1) : (B_i, G_i)\in \acccond'_0} \cup \bigcup_{i=1}^{k_1} \setnocond{(\states'_0\times B_i, \states'_0\times G_i) : (B_i, G_i)\in \acccond'_1}$.
This construction is fairly standard and we can see that a run $\widehat{\rho}$ of $\S$ is accepting if, and only if, the run of $\widehat{\rho}$ projected down on $\states'_0$ (respectively, $\states'_1$) must satisfies $\acccond'_0$ (respectively, $\acccond'_1$).
Let $w \in \infwords$, and $\rho_0$ and $\rho_1$ are the runs over $w$ in $\S_0$ and $\S_1$, respectively.
In other words, the run of $\S = \S_0\times\S_1$ over $w$ then is basically the product $\rho_0 \times \rho_1$.
Moreover, if $w \in \lang{\A}$, then the run $\rho_0 \times \rho_1$ satisfies $\acccond'$ (i.e., both $\acccond'_0$ and $\acccond'_1$), which indicates that $\rho_0 \times \rho_1$ is accepting in $\S $.


Since $\lang{\A} = \lang{\S}$, we have that $\psem(\M, \S) = \psem(\M, \A)$. 
As $\S$ is deterministic and thus GFM, we have $\psem(\M, \S) = \psyn(\M, \S)$.
Therefore, we only need to prove that $\psyn(\M, \A) \geq \psyn(\M, \S)$.

Let $\sigma$ be the \emph{optimal} strategy on $\M$ to obtain the maximal satisfaction probability for $\lang{\A}$, i.e., \[\pp_{\M^{\sigma}}(\lang{\A}) = \sup_{\sigma'}\pp\setnocond{ \xi \in \Omega_{\sigma'}^{\M}(s_0): \lab(\xi) \in \lang{\A}}.\]

Note that here $\sigma$ is usually not a positional strategy on $\M$ and needs extra memory to store history.

Let $\T_0 = (\states'_0, \trans'_0, \initState'_0)$ and $\T_1 = (\states'_1, \trans'_1, \initState'_1)$ be the TSes of $\S_0$ and $\S_1$, respectively.
Then, the TS of $\S$ is $\T_0 \times \T_1$.
We now work on the large Markov chain $\M' = \M^\sigma \times \T_0 \times \T_1$.
Let $S_{\M'}$ be the state space of $\M'$.
It is easy to see that 
\[ \psyn(\M, \S) = \psem(\M, \S) = \psyn(\M',\S)\]
since $\sigma$ is an optimal strategy for $\M$ to achieve maximal satisfaction of $\lang{\A}$. 
Therefore, to prove that $\psyn(\M, \A) \geq \psyn(\M, \S)$, we only need to prove that $\psyn(\M, \A) \geq \psyn(\M', \S)$. 

Let $c \in \{0,1\}$.
We know that $\A_c$ is GFM.
According to Theorem~\ref{lem:bisim}, there is an optimal strategy $\sigma_c$ for $\M'\times\A_c$ to AEC-simulate $\M' \times \S_c$.
Note that $\sigma_c$ usually needs extra memory to resolve the nondeterminism in $\M'\times \A_c$, since the acceptance condition here is Streett.

For syntactic probabilities, in order to prove that $\psyn(\M, \A) \geq \psyn(\M',\S)$, we will prove that there exists a strategy $\sigma^{*}$ for $\M\times \A$ such that over an infinite path $\rho = s_0 a_0 s_1 a_1\cdots \in S\cdot (\act \times S)^{\omega}$ of $\M$, if there is an accepting run of $\M'\times \S$ containing $\rho$, we can also use $\sigma^*$ to construct an accepting run in $\M\times \A$ that contains $\rho$.
In other words, we can just prove that over an infinite path $\rho = s_0 a_0 s_1 a_1\cdots \in S\cdot (\act \times S)^{\omega}$, if there is an accepting run of the Markov chain $\M'\times \S$ containing $\rho$, the Markov chain $(\M\times \A)^{\sigma^*}$ can also produce an accepting run containing $\rho$. 
Now we construct the optimal strategy $\sigma^*$ for $\M \times \A$ by building the product of the strategy $\sigma$ for $\M$, $\sigma_0$ and $\sigma_1$, which resolve the nondeterminism of $\A_0$ and $\A_1$, respectively, independently in $\M' \times \A$ (viewing $\A$ as the cross product of $\A_0$ and $\A_1$).
In other words, we can define $(\M\times \A)^{\sigma^*}$ as $(\M^{\sigma} \times \T_0\times\T_1 \times \A)^{\sigma_0,\sigma_1} = (\M^{\sigma} \times \T_0\times\T_1 \times \A_0\times \A_1)^{\sigma_0,\sigma_1}$ with the acceptance condition $\acc$.
That is, in this case, $\sigma_c$ works independently from $\sigma_{1-c}$ on $\M^{\sigma} \times \T_0 \times \T_1 \times \A_c$ and the state space of $\T_0 \times \T_1$ will be used as extra memory for $\sigma_c$ in addition to store states from $\M^{\sigma}$ and $\A_c$.

Let $\rho$ be a path of $\M$ generated by the optimal strategy $\sigma$.
Assume that $\M'\times \S = \M^{\sigma} \times \T_0\times\T_1\times \S_0\times \S_1$ is able to produce an accepting run $\widehat{\rho} = \rho \times \rho_0 \times \rho_1 \times \rho_0 \times \rho_1$ where $\rho_0$ and $\rho_1$ are the runs of $\S_0$ and $\S_1$ over $\lab(\rho)$, respectively.
It follows that $\rho_0$ and $\rho_1$ must both be accepting since $\widehat{\rho}$ is accepting in $\M'\times\S $.
Moreover, since $\M'\times \A_0$ AEC-simulates $\M'\times\S_0$ and $\M'\times \A_1$ AEC-simulates $\M\times\S_1$, we can construct an accepting run $\rho\times \rho'_0$ in $\M'\times \A_0$ for $\rho\times \rho_0$ using $\sigma_0$ and an accepting run $\rho\times \rho'_1$ in $\M'\times \A_1$ using $\sigma_1$.
Recall that $\rho$ is constructed by strategy $\sigma$ on $\M$.
Therefore, the strategy $\sigma^{*}$ is able to construct a run in form of $\widehat{\rho}' = \rho \times \rho_0\times \rho_1\times \rho'_0\times \rho'_1\times \setnocond{0,1}^{\omega}$ in $\M'\times \A$.
Since both $\rho'_0$ and $\rho'_1$ are accepting, $\widehat{\rho}'$ must also be accepting by the definition of intersection operation.
Clearly, $\widehat{\rho}'$ has the same probability as $\widehat{\rho}$ and $\rho$.
Therefore, there is a strategy $\sigma^*$ for $\M \times \A$ such that $\psyn(\M, \A) \geq \psyn(\M', \S) = \psem(\M, \A) = \psem(\M, \A)$.
It follows that $\psyn(\M,\A) = \psem(\M,\A)$ for any given MDP $\M$.
This then concludes that $\A$ is also GFM.

\end{proof}

\subsection*{Proof for Theorem~\ref{lem:formla-length}}

\begin{proof}
    Let $n_1$ be the number of conjunctions, $n_2$ be the number of disjunctions and $d_i$ be the length of the $i$-th formula in form of $\forall\psi, \exists\psi,\forall\exists\psi$ and $\exists\forall\psi$, where $1 \leq i \leq k$.
    We call these formulas \emph{leaf formulas}.
    That is, $|\Psi| = n_1 + n_2 + \Sigma_{i=1}^k d_i$.
    Every DFA constructed in the leaf $i$ has $2^{2^{\O(d_i)}}$ states for \ltlfplus formula and $2^{\O(d_i)}$ for \ppltlplus formula.
    Hence, the corresponding \buchiaut automaton in the leaf $i$ has $2\cdot 2^{2^{\O(d_i)}} \in 2^{2^{\O(d_i)}}$ states for \ltlfplus formula and $2\cdot 2^{\O(d_i)} \in 2^{\O(d_i)}$ for \ppltlplus formula.
    We know that the syntax tree of $\Psi$ is a binary tree, the number of internal nodes of $\Psi$ is $n_1 + n_2$, and the number of leaf nodes is $k$.
    The final automaton is basically the result of different cartisian products over the leaf \buchiaut automata.
    After Boolean combination, the resultant automaton will have  $2^{n_1} \cdot \Pi_{i= 1}^k 2^{2^{\O(d_i)}} \leq 2^{|\Psi|} \cdot 2^{\Sigma_{i=1}^k 2^{\O(d_i)}} \leq 2^{|\Psi|} \cdot 2^{ \cdot 2^{\Sigma_{i=1}^k\O(d_i) } } \in 2^{2^{\O(|\Psi|)}}$ (respectively, $2^{n_1} \cdot\Pi_{i= 1}^k 2^{\O(d_i) } \in 2^{\O(|\Psi|)}$ ) for \ltlfplus (respectively, \ppltlplus) formulas.
    The extra $2^{n_1}$ factor is due to the extra bit $c$ for copying the state space in the intersection operations.
    Thus, the theorem follows.
\end{proof}

\end{document}